\newtheorem{theorem}{Theorem}%[section]
\newtheorem{lemma}[theorem]{Lemma}
\newtheorem{corollary}[theorem]{Corollary}
\newtheorem{proposition}[theorem]{Proposition}
\newtheorem{claim}[theorem]{Claim}
\newtheorem{assumption}{Assumption}
\newtheorem*{rep@theorem}{\rep@title}
\newcommand{\newreptheorem}[2]{%
\newenvironment{rep#1}[1]{%
 \def\rep@title{#2 \ref{##1}}%
 \begin{rep@theorem}}%
 {\end{rep@theorem}}}
\newcommand {\ignore} [1] {}
\def \eqdef {:=}
\DeclareMathOperator{\poly}{poly}
\providecommand{\eqdef}{:=}
\newcommand{\etal}{{\em et al.\ }\xspace}
\newcommand{\ProblemName}[1]{\textsf{#1}}
\newcommand{\SCO}{\ProblemName{Set Cover}\xspace}
\newcommand{\KTR}{\ProblemName{$k$-Tree}\xspace}
\newcommand{\NTR}{\ProblemName{$n$-Tree}\xspace}
\newcommand{\KPT}{\ProblemName{$k$-Path}\xspace}
\newcommand{\KHP}{\ProblemName{$k$-HyperPath}\xspace}
\newcommand{\HAMP}{\ProblemName{Hamiltonian Path}\xspace}
\newcommand{\kMLD}{\ProblemName{$k$-MLD}\xspace}
\newcommand{\EXC}{\ProblemName{Exact Cover}\xspace}
\newcommand{\SPA}{\ProblemName{Set Partitioning}\xspace}
\newcommand{\KHC}{\ProblemName{kHyperCycle}\xspace}
\newcommand\tOm{\ensuremath{\tilde \Omega}}
\providecommand{\card}[1]{\lvert#1\rvert}
\title{Nearly Optimal Time Bounds for \texorpdfstring{$k$-Path}{kPath} in Hypergraphs}
\author{Lior Kamma \thanks{Computer Science Department. Aarhus University. Supported by a Villum Young Investigator Grant. \texttt{lior.kamma@cs.au.dk}.} \qquad 
Ohad Trabelsi\thanks{Weizmann Institute of Science, Rehovot, Israel \texttt{ohad.trabelsi@weizmann.ac.il}.}
}
\begin{document}
\maketitle

\begin{abstract}
We give almost tight conditional lower bounds on the running time of the \KHP problem.
Given an $r$-uniform hypergraph for some integer $r$, \KHP seeks a tight path of length $k$. That is, a sequence of $k$ nodes such that every consecutive $r$ of them constitute a hyperedge in the graph.
This problem is a natural generalization of the extensively-studied \KPT problem in graphs.
We show that solving \KHP in time $O^*(2^{(1-\gamma)k})$ where $\gamma>0$ is independent of $r$ is probably impossible. Specifically, it implies that \SCO on $n$ elements can be solved in time $O^*(2^{(1 - \delta)n})$ for some $\delta>0$. 
The only known lower bound for the \KPT problem is $2^{\Omega(k)}\poly(n)$ where $n$ is the number of nodes assuming the Exponential Time Hypothesis (ETH), and finding any conditional lower bound with an explicit constant in the exponent has been an important open problem.

We complement our lower bound with an almost tight upper bound. Formally, for every integer $r\geq 3$ we give algorithms that solve \KHP and \KHC on $r$-uniform hypergraphs with $n$ nodes and $m$ edges in time $2^k m \cdot\poly(n)$ and $2^k m^2 \poly(n)$ respectively, and that is even for the directed version of these problems. 
To the best of our knowledge, this is the first algorithm for \KHP.
The fastest algorithms known for \KPT run in time $2^k\poly(n)$ for directed graphs (Williams, 2009), and in time $1.66^k\poly(n)$ for undirected graphs (Bj\"orklund \etal, 2014).
\end{abstract}

\section{Introduction}
In the \KPT problem, given a graph $G$ and an integer $k$, the goal is to decide whether $G$ contains a simple path of length $k$. This is a fundamental combinatorial optimization problem, and contains the \HAMP as a special case. The fastest algorithm known for directed graphs runs in time $2^k\poly(n)$ \cite{Will09} and for the undirected version the fastest algorithm runs in time $1.66^k\poly(n)$ \cite{BHPK17}. The only known lower bound for these problems is $2^{\Omega(k)}\poly(n)$ assuming the Exponential Time Hypothesis (by a simple reduction from Hamiltonian Path), and finding any conditional lower bound with explicit constant in the exponent is an important open problem. 

Motivated by this challenge, we study the \KHP problem, which is a natural generalization of \KPT. Given an $r$-uniform hypergraph $G$, the goal is to find a tight path in $G$ of length $k$. That is, a sequence of $k$ nodes such that every $r$ consecutive nodes constitute an edge in the graph. 
The cycle variant of problem was previously studied by Lincoln \etal \cite{LWW18} for large values of the uniformity parameter $r=\Omega(k)$, where a conditional lower bound of $\tOm(n^k)$ was presented.
Considering smaller values of $r$, we show that for every $\gamma>0$ there is some integer $r \ge 3$ such that solving \KHP in time $O^*(2^{(1-\gamma)k})$ on $r$-uniform hypergraphs is most likely impossible. Formally,  we present a hierarchy of conditional lower bounds for \KHP in undirected hypergraphs with explicit constants in the exponents, where these lower bounds approach $O^*(2^k)$ as $r$ increases.

We complement our conditional lower bounds with almost matching upper bounds. That is, for every integer $r\geq 3$ we show an algorithm with running time $2^k m \cdot\poly(n)$ for \KHP on $r$-uniform hypergraphs with $n$ nodes and $m$ edges even for the directed version of this problem, where every hyperedge is a sequence of $r$ nodes, and tight paths respect the ordering of the corresponding edges. 

\subsection{Our Results}

Our main result shows that \KHP cannot be solved faster than $O^*(2^k)$ by an exponential factor independent of the size $r$ of a hyperedge unless \SCO on $n$ elements can be solved significantly faster than $O^*(2^n)$. The latter statement is known as the \SCO conjecture, introduced by Cygan \etal \cite{CDLMNOPSW16}. Formally, we show the following.
\begin{theorem} \label{th:lowerBound}
Let $\gamma>0$ and let $r\geq 3$ be an integer. If \KHP in undirected $r$-uniform hypergraphs can be solved in time $O^*(2^{(1-1/(r-1)-\gamma)k})$, then there exists $\delta=\delta(\gamma)>0$, such that \SCO on $n$ elements can be solved in time $O^*(2^{(1-\delta)n})$.
\end{theorem}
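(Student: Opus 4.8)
The plan is to prove the contrapositive by a polynomial-time many–one reduction that turns an instance of \SCO on a universe $U$ of size $n$ (with a bound $t$ on the number of sets) into an equivalent instance of \KHP on an $r$-uniform hypergraph whose target length is $k=\frac{r-1}{r-2}\,n\,(1+o(1))$. Given such a reduction together with the hypothesised $O^*(2^{(1-1/(r-1)-\gamma)k})$ algorithm, running the algorithm on the produced instance decides \SCO in time
\[
O^*\!\left(2^{\left(1-\tfrac1{r-1}-\gamma\right)\frac{r-1}{r-2}n}\right)=O^*\!\left(2^{\left(1-\gamma\frac{r-1}{r-2}\right)n}\right),
\]
so that $\delta\eqdef\gamma\frac{r-1}{r-2}>0$ works. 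Thus everything reduces to building the gadget and controlling its length, and the single quantity I must keep tight is $k$: any reduction using $k$ much larger than $\frac{r-1}{r-2}n$ would make the exponent exceed $n$ and give no contradiction.

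The gadget I would use lays the tight path out in consecutive \emph{blocks} of $r-1$ nodes, each block consisting of one \emph{selector} node carrying the label of a set $S\in\mathcal F$ followed by $r-2$ \emph{element} slots. I create exactly one element node per universe element; since there are $n/(r-2)$ blocks with $r-2$ element slots each, there are exactly $n$ element slots, so a simple path of length $k$ must place each of the $n$ distinct element nodes in its own slot, which forces every element to be covered. I would tag every position with its phase $\in\{0,1,\dots,r-2\}$ inside its block, and then admit an $r$-uniform hyperedge on $r$ consecutive nodes only when the configuration is \emph{consistent}: each element slot in the window must belong to the set named by the unique selector governing \emph{its own} block, where the phase labels let the window predicate identify that governing selector and thereby avoid over-constraining a slot by a neighbouring block's selector. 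Because a block has length $r-1$ while a window has length $r$, every window meets at least one selector, so this predicate propagates along the whole path and forces each block to be \emph{monochromatic}: its $r-2$ elements form a subset of a single chosen set. A tight $k$-path then decodes to a partition of $U$ into $(r-2)$-subsets, each subordinate to some set of $\mathcal F$, i.e.\ to a cover of the required shape.

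To bound the number of chosen sets by $t$, I would prepend a short commitment segment along which the path nondeterministically fixes a subfamily of size $t$, after which selector labels may only point into that committed subfamily; the simple-path requirement forbids reusing selector nodes, and repeated use of the same set across blocks is handled by supplying a bounded number of labelled copies per committed set. Completeness—a cover of size $\le t$ yielding a tight $k$-path—is routine once the routing hyperedges admit every admissible ordering of the elements inside a block, and soundness—every tight $k$-path yielding such a cover—follows from the rigidity of the window predicate together with the forced use of all $n$ element nodes.

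The main obstacle I anticipate is the tension between \emph{few selectors} and \emph{no wasted slots}. The favourable exponent comes precisely from spending only one selector per $r-1$ nodes, which forces every selector to govern a \emph{full} block of $r-2$ elements; but an honest cover by $t=\Theta(n)$ sets generically leaves a partial block at each set's boundary, and $\Theta(t)=\Theta(n)$ such remainders would inflate $k$ by a constant factor and destroy the exponent. Resolving this is the crux: I would reduce not from generic \SCO but from the partition-type variant (\SPA/\EXC), which is also hard under the \SCO conjecture and whose disjointness makes the element-to-set assignment unambiguous, and then pad each set's size up to a multiple of $r-2$ using a controlled, $o(n)$ number of private dummy elements so that blocks align exactly and the total path length stays $\frac{r-1}{r-2}n(1+o(1))$. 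Secondary technicalities to verify are that the padding and divisibility adjustments do not change feasibility, that the commitment gadget contributes only lower-order length, and that the whole construction runs in $\poly$ (hence sub-exponential) time so it does not dominate the final running time.
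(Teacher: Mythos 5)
Your high-level frame is the right one, and it matches the paper's: one set-node per block of $r-1$ path positions followed by $r-2$ element-nodes, shared element-nodes (one per universe element), padding so that $r-2$ divides the relevant quantities, and the exponent calculus $k\approx\frac{r-1}{r-2}n$ giving $(1-\tfrac1{r-1}-\gamma)k\approx(1-\gamma\tfrac{r-1}{r-2})n$ (the paper sets $k=(n+2)(1+1/(r-2))+1$). However, the gadget as you describe it is not sound. Your window predicate enforces only that each block is \emph{monochromatic}, so a tight $k$-path decodes merely to a partition of $U$ into $(r-2)$-element pieces, each contained in \emph{some} member of ${\cal F}$. That is strictly weaker than what either target problem requires. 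For \EXC it fails outright: take $r=4$, $U=\{1,2,3,4\}$, ${\cal F}=\{\{1,2,3\},\{3,4\}\}$; the blocks $\{1,2\}\subseteq\{1,2,3\}$ and $\{3,4\}\subseteq\{3,4\}$ admit a path, yet no exact cover exists because the two sets intersect. Named sets need be neither pairwise disjoint nor fully used. For \SCO/\SPA it also fails, since the number of distinct sets named by selectors is bounded only by $n/(r-2)$, not by the threshold $t$. The paper closes exactly this hole by making the gadget far more rigid: within each set the elements must be traversed \emph{in full, consecutively, and in a fixed order} $x_i^1,\ldots,x_i^{|S_i|}$, and a transition edge between two sets exists \emph{only} if the sets are disjoint and the switch happens precisely at the last element of one set and the first element of the other. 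Your "any admissible ordering inside a block" choice destroys the local detectability of "this set has been completely consumed," which is what the fixed ordering buys.

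The second gap is the cardinality bound. Your "commitment segment" cannot be implemented: the only constraints a tight path imposes are local (every $r$ consecutive nodes form an edge) plus global simplicity, and neither mechanism can propagate a commitment to a $t$-element subfamily across $\Theta(n)$ subsequent blocks --- simplicity forbids reusing \emph{nodes}, but using many \emph{distinct} selector labels violates nothing. The paper's architecture exists precisely to avoid ever needing a cardinality bound inside the hypergraph: it first proves (its Lemma~\ref{l:scoToExc}) that fast \EXC implies fast \SCO, and this is not citable as known --- the paper explicitly states this \SCO-hardness of \EXC was new. That lemma uses Nederlof's algorithm for large $t$, Nederlof's \SCO-to-\SPA reduction for small $t$, and then a color-coding step: color the sets with $t$ colors and add $t$ fresh color-elements, one per color, to every set of that color, so that disjointness of an exact cover automatically caps the number of sets at $t$ and the threshold disappears. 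Only then is the hypergraph gadget applied, and it only has to characterize exact covers. So the repair is structural: move the cardinality bookkeeping out of the hypergraph (where local constraints cannot express it) into this preprocessing reduction, and strengthen your window structure to force full, ordered traversal of each selected set with transitions permitted only between disjoint sets.
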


We prove Theorem~\ref{th:lowerBound} by showing a reduction from \SCO to \KHP. The reduction consists of two main parts. We first show that \SCO can be reduced to \EXC, thus implying that \EXC is "at least as hard" as \SCO.

\begin{lemma}\label{l:scoToExc}
Let $\gamma>0$. If \EXC on $n$ elements can be solved in time $O^*(2^{(1-\gamma)n})$, then there exists a constant $\delta>0$ such that \SCO on $n$ elements can be solved in time $O^*(2^{(1-\delta)n})$.
\end{lemma}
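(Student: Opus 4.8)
The plan is to reduce \SCO to \EXC by a \emph{subset-closure} construction that leaves the universe untouched, so that a running time for \EXC measured in the number of elements translates directly into a running time for \SCO on the same number of elements. Concretely, given a \SCO instance $(U,\mathcal{F})$ with $|U|=n$, I would build the \EXC instance $(U,\mathcal{F}')$ on the \emph{same} universe, where $\mathcal{F}' \eqdef \{T \ne \emptyset : T \subseteq S \text{ for some } S \in \mathcal{F}\}$ is the family of all nonempty subsets of sets in $\mathcal{F}$. The whole point of keeping the universe equal to $U$ is that a solver running in $O^*(2^{(1-\gamma)n})$ on any $n$-element \EXC instance runs in $O^*(2^{(1-\gamma)n})$ here as well, so no loss is incurred in the exponent from the universe, and we may ultimately take $\delta$ essentially equal to $\gamma$.

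The second step is to verify that the optimum transfers. I would show that the minimum number of sets in a cover of $(U,\mathcal{F})$ equals the minimum number of pairwise-disjoint sets in a partition of $U$ drawn from $\mathcal{F}'$. One direction \emph{disjointifies} a cover $S_1,\dots,S_\ell$ by replacing $S_i$ with $T_i \eqdef S_i \setminus \bigcup_{j<i} S_j$; the nonempty $T_i$ are pairwise disjoint, cover $U$, and each lies in $\mathcal{F}'$, so they form an exact cover of size at most $\ell$. The reverse direction replaces each part $T_i$ of an exact cover by a superset $S_i \in \mathcal{F}$ containing it, yielding a cover of $U$ of no larger cardinality. If \EXC is posed as the decision problem ``is there a partition into exactly $k$ sets?'', I would additionally observe that over $\mathcal{F}'$ exact covers exist for every size between the optimum and $n$ (repeatedly split a non-singleton part into a singleton and the rest, both of which lie in $\mathcal{F}'$), so the predicate is monotone in $k$ and the \SCO optimum is recovered by a binary search making only $O(\log n)$ calls to the \EXC solver.

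The main obstacle is controlling $|\mathcal{F}'|$: taking all subsets can blow the number of sets up from $m$ to $\sum_{S}2^{|S|}$, and if some set has size $\Theta(n)$ this becomes $2^{\Theta(n)}$, so the polynomial factor hidden in $O^*(\cdot)$ for the \EXC call would itself become exponential in $n$ and swamp the $2^{(1-\gamma)n}$ saving. The fix is to run the argument on \emph{bounded}-size instances: for sets of size at most a constant $d$ one has $|\mathcal{F}'| \le 2^d m$, only a constant-factor blowup, and this is precisely the regime in which the \SCO conjecture of Cygan \etal is phrased. Thus it suffices to apply the subset-closure reduction to $d$-bounded \SCO for every constant $d$; a fast \EXC algorithm then solves $d$-bounded \SCO in $O^*(2^{(1-\gamma)n})$ for all $d$, contradicting the \SCO conjecture with $\delta=\gamma$ and hence yielding the claimed speedup for \SCO. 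I expect the delicate point in the write-up to be exactly this bookkeeping: keeping the set-family blowup polynomial while leaving the universe size — and therefore the exponent — unchanged.
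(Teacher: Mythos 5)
Your reduction targets the wrong problem, and this is not a cosmetic issue but the entire difficulty of the lemma. In this paper \EXC carries \emph{no} cardinality constraint: it merely asks whether some pairwise-disjoint subfamily covers $U$ (the paper stresses exactly this point -- ``due to lack of budget constraints''). Under that definition your subset-closure instance $(U,\mathcal{F}')$ is a `yes' instance if and only if $\mathcal{F}$ covers $U$ at all, since disjointifying \emph{any} cover produces a partition from $\mathcal{F}'$; it reveals nothing about the minimum cover size. Your patch -- binary search over ``is there a partition into exactly $k$ sets?'' -- silently swaps \EXC for the decision version of \SPA, which is a budgeted problem. The missing idea is precisely how to encode the threshold $t$ into a budget-free \EXC instance. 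The paper does this by first reducing (via self-reducibility) to the decision question with threshold $t$, splitting into cases: for $t \ge \delta^{1/4}n$ it invokes Nederlof's $O^*(2^{(1-(t/n)^4)n})$ Monte-Carlo algorithm; for $t \le \delta^{1/4}n$ it applies Nederlof's reduction from \SCO to \SPA (costing a $2^{\delta n}$ blowup in the number of sets) and then strips the budget by color coding: randomly color the sets with $t$ colors, add $t$ new elements, one per color, placed in all sets of that color, and repeat $n2^{2t}$ times. An exact cover can use at most one set per color, hence at most $t$ sets, and the universe grows only to $n+t$ elements -- which is why the exponent survives.

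Even granting a budgeted \EXC oracle, your bounded-set-size fix proves a different, weaker statement than the lemma. Restricting to sets of size at most a constant $d$ shows only that a fast \EXC algorithm would refute the Set Cover Conjecture; the lemma demands an actual $O^*(2^{(1-\delta)n})$ algorithm for \SCO with \emph{unrestricted} set sizes. The negation of the conjecture -- there is $\varepsilon>0$ such that for every $\ell$, $\ell$-bounded \SCO is solvable in $O^*(2^{(1-\varepsilon)n})$, where the algorithm and hidden constants may depend on $\ell$ -- does not yield an algorithm for general \SCO; the implication runs in the opposite direction. And the bound cannot be dropped: for sets of size $\Theta(n)$ the closure $\mathcal{F}'$ has $2^{\Theta(n)}$ sets, and since the assumed solver runs in time $O(m^c 2^{(1-\gamma)n})$ for a \emph{fixed} constant $c$, the total becomes $\Omega(2^{(c+1-\gamma)n})$, which exceeds $2^n$ whenever $c>\gamma$. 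Note that a controlled blowup in the number of sets is tolerable (the paper's own route tolerates $m'\le m2^{\delta n}$ because $c\delta$ can be made small by choosing $\delta$), but an exponent of order $n$ in the set count is fatal. So both halves of your plan -- the target problem and the size control -- need to be replaced, essentially by the paper's two-stage Nederlof-plus-color-coding argument.
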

To the best of our knowledge, this "\SCO hardness" of \EXC was not known before. As \EXC seems to be a more robust source of reductions than its optimization variant \SPA, due to lack of budget constraints, this result may be of independent interest.

Secondly, we show that \EXC can be reduced to \KHP. Combining these two reduction constitutes the proof of Theorem~\ref{th:lowerBound}.
\begin{lemma}\label{l:excToKhp}
Let $\gamma>0$ and let $r\geq 3$ be an integer. If \KHP in undirected $r$-uniform hypergraphs can be solved in time $O^*(2^{(1-1/(r-1)-\gamma)k})$, then there exists a constant $\delta >0$, such that \EXC on $n$ can be solved in time $O^*(2^{(1-\delta)n})$.
\end{lemma}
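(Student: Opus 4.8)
The plan is to exhibit a polynomial-time reduction that transforms an instance of \EXC on a universe of size $n$ into an equivalent instance of \KHP on an $r$-uniform hypergraph whose sought tight path has exactly $k=\frac{r-1}{r-2}\,n$ vertices. Granting such a reduction, the lemma follows by substitution: if \KHP admits an $O^*(2^{(1-1/(r-1)-\gamma)k})$ algorithm, running it on the produced instance solves \EXC in time
\[
O^*\!\left(2^{\left(1-\frac{1}{r-1}-\gamma\right)\frac{r-1}{r-2}\,n}\right)=O^*\!\left(2^{(1-\delta)n}\right),\qquad \delta=\gamma\cdot\frac{r-1}{r-2}>0,
\]
since $\bigl(1-\frac{1}{r-1}\bigr)\frac{r-1}{r-2}=1$. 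Hence all the content lies in producing a correct reduction hitting the path-length ratio $k/n=\frac{r-1}{r-2}$: this is precisely the ratio at which the savings factor $2^{-k/(r-1)}$ available to a fast \KHP solver is exactly cancelled by the length blow-up, which is what pins down the constant $1/(r-1)$ in the statement.

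For the construction I would first normalize the \EXC instance so that every set has size exactly $r-2$ (a near size-preserving step, discussed below), and then build the hypergraph from two kinds of vertices: one \emph{element vertex} $x_u$ per universe element $u$, shared across all sets containing $u$, and one \emph{label vertex} $y_S$ per set $S$. A candidate solution is encoded as a concatenation of \emph{blocks}, one per chosen set, where the block of $S=\{u_1<\dots<u_{r-2}\}$ is the length-$(r-1)$ vertex sequence $y_S,x_{u_1},\dots,x_{u_{r-2}}$. The decisive hyperedges are the $r$-windows that begin at a label vertex: such a window spans an entire block together with the following label, and is declared an edge exactly when its $r-2$ element vertices are precisely the elements of the set named by its leading label, in increasing order. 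One then adds all remaining $r$-windows that occur inside legal block-concatenations. Since labels sit $r-1$ apart, every $r$-window contains one or two labels, and the two-label windows are exactly the block-certifying ones; this makes the label vertices act as rigid synchronization markers for the block decomposition. The resulting path has $k=\frac{r-1}{r-2}\,n$ vertices.

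Correctness would be argued in both directions. From an exact cover with chosen sets $S_{i_1},\dots,S_{i_t}$, where $t=n/(r-2)$, concatenating the corresponding blocks yields a tight path on $t(r-1)=\frac{r-1}{r-2}\,n$ vertices, all distinct because the chosen sets are disjoint and the labels are distinct. Conversely, a tight path on $k$ distinct vertices must, by the label-synchronization windows, decompose into complete, correctly ordered blocks; since element vertices are shared, distinctness of the path vertices forces the named sets to be pairwise disjoint, while using all $k$ vertices forces every one of the $n$ element vertices to appear, i.e.\ the chosen sets cover the universe. Disjoint sets covering the universe is exactly an exact cover. Crucially, no bound on the number of chosen sets is ever enforced, which is precisely why \EXC, rather than its budgeted variant \SPA, is the convenient starting point.

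The principal obstacle I anticipate is making these \emph{local} sliding-window constraints enforce the \emph{global} partition property while keeping the hypergraph polynomial-sized and the ratio $k/n$ pinned at $\frac{r-1}{r-2}$. Two things need care. First, one must rule out spurious tight paths that stitch together locally legal windows without corresponding to any single block decomposition; the fix is to lean on the label vertices as unique synchronizers, and one must verify that a two-label window genuinely forces its entire enclosed block and that the shifted one-label windows cannot be recombined illegally. Second, the clean ratio depends on every chosen set contributing exactly $r-2$ elements and one label, so the normalization to uniform set size $r-2$ (via all-or-nothing gadgets that split oversized sets and pad undersized ones) must increase the universe only by a $1+o(1)$ factor: a genuine constant-factor blow-up in $n$ would shrink $\delta$ and, if large enough, destroy the conclusion altogether. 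Establishing this near size-preserving normalization, and confirming that the forced label positions are exactly the $1/(r-1)$ fraction of "free" coordinates a fast algorithm could exploit, is where the bulk of the technical work sits.
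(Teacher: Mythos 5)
Your high-level plan is the same as the paper's: shared element-nodes, per-set ``synchronizer'' nodes spaced $r-1$ apart along the path, $k\approx\frac{r-1}{r-2}\,n$, and your exponent arithmetic $\bigl(1-\frac{1}{r-1}-\gamma\bigr)\frac{r-1}{r-2}=1-\gamma\frac{r-1}{r-2}$ is exactly the paper's. The fatal divergence is your first step, the normalization of every set to size exactly $r-2$. This cannot be done with a $1+o(1)$ (or, for small $\gamma$, even constant-factor) blow-up of the universe. Any all-or-nothing splitting of a set $S$ into chunks needs auxiliary universe elements tying the chunks of $S$ together, and chunks of two \emph{disjoint} sets $S,T$ cannot share such an element: an exact cover of the original instance may contain both $S$ and $T$, and the corresponding family of chunks would then cover the shared auxiliary element twice. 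The same applies to padding elements for undersized sets. So normalization adds $\Omega(1)$ fresh elements per nonconforming set, i.e.\ $\Omega(m)$ elements overall, while the instances this lemma must serve can have $m$ far larger than $n$ --- the \EXC instances produced by the reduction from \SCO to \EXC (Lemma~\ref{l:scoToExc}) have up to $m2^{\delta n}$ sets, and the lemma's statement allows arbitrary $m$ in any case. Blowing the universe up from $n$ to $n'\gg n$ turns your final bound into $O^*(2^{(1-\delta')n'})$, which says nothing about $2^{(1-\delta)n}$; already a constant-factor blow-up $n'=cn$ kills the conclusion whenever $\gamma\le\frac{(c-1)(r-2)}{c(r-1)}$. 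The paper avoids normalization altogether: each set $S_i$ of \emph{arbitrary} size gets its own set-nodes $u_i^1,\dots,u_i^{|S_i|}$ --- hypergraph nodes, not universe elements --- the path traverses the elements of a chosen set consecutively, inserting one of that set's nodes after every $r-2$ elements, and transition edges, defined only for disjoint pairs of sets, stitch consecutive chosen sets together. That is precisely what buys arbitrary set sizes at zero cost in $n$.

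Second, your claim that a tight $k$-path ``must decompose into complete, correctly ordered blocks'' is arithmetically false for your choice of $k$. Even granting that labels are forced to lie exactly $r-1$ apart (which itself needs the intersection analysis you defer), let $D$ be the number of path positions before the first label plus after the last one; then $k-1=D+(t-1)(r-1)$ for $t$ labels, and counting element-vertices ($k-t\le n$) forces $t=n/(r-2)$ and hence $D=r-2>0$. So \emph{every} tight $k$-path has exactly $r-2$ element-vertices in partial blocks at its two ends; when the head part is nonempty, those vertices form a suffix of a set whose label never appears on the path, and extracting the labeled sets need not yield an exact cover. You must either rule such paths out or prove they still certify some cover, and you do neither. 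The paper closes exactly this hole with the extra nodes $x_{start},x_{end},u_{start},u_{end}$, the starting/ending edges, and the choice $k=(n+2)(1+1/(r-2))+1$ (so $k\equiv 1\pmod{r-1}$), which makes the counting in Lemma~\ref{l:setElementNodesPath} force $D=0$ and lets Corollary~\ref{cor:startEnd} pin down the endpoints; the divisibility issues you elide are handled by Assumption~\ref{ass:modr} and its later removal. In short, your skeleton and exponent calculation are the paper's, but the two load-bearing components --- handling arbitrary set sizes and forcing the path to start and end flush with a block --- are missing, and the first cannot be supplied by the normalization you propose.
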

Lemma~\ref{l:excToKhp} as well as its proof demonstrate that \EXC may be the right problem to associate with \KPT, given the connection it shows between its generalization \KHP and \EXC.
We note that by simple modifications to our proofs, all of our results can also be applied for the problem of finding cycles rather than paths with some small (polynomial in the input size) overhead. In addition, our conditional lower bounds for the undirected case imply the same bounds for the directed one by the following simple reduction. Given an $r$-uniform undirected hypergraph $H$, construct an $r$-uniform directed hypergraph $\vec{H}$ by including for every edge $e \in H$ all possible orientations of $e$. Clearly there is a $k$-path in $\vec{H}$ if and only if there is one in $H$. Moreover, the size of $\vec{H}$ is at most $r!$ times the size of $H$.

We accompany our lower bound by an almost tight upper bound showing that even directed \KHP is not too difficult a generalization of \KPT. 
While directed \KPT can be solved in time $O^*(2^k)$, it seems that there is no trivial way to extend this algorithm to \KHP. Our second result shows such an extension. That is, for every fixed integer $r$, directed $r$-uniform \KHP admits an algorithm with running time $2^k m n^{O(1)}$. 
\begin{theorem}\label{th:alg}
For every integer $r\geq 3$, directed $r$-uniform \KHP (respectively \KHC) can be solved in time $2^k m n^{O(1)}$ for some universal constant $c>0$ (respectively $2^k m^2 n^{O(1)}$), where $n$ and $m$ are the number of nodes and hyperedges respectively.
\end{theorem}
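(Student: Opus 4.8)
The plan is to extend the algebraic (group-algebra) framework behind the $O^*(2^k)$ algorithm for directed \KPT \cite{Will09} from single edges to tight $r$-windows, and to route the associated dynamic program through the hyperedge set so that the per-layer cost is $O(m)$ rather than $n^{\Theta(r)}$. Concretely, associate a variable $x_v$ to each node $v$ and consider the formal polynomial $P=\sum \prod_{i=1}^{k} x_{v_i}$, where the sum ranges over all \emph{tight walks} $(v_1,\dots,v_k)$ of length $k$, i.e.\ sequences in which every $r$ consecutive nodes form a hyperedge (respecting orientation). A tight walk contributes a multilinear degree-$k$ monomial exactly when its nodes are pairwise distinct, that is, exactly when it is a tight path; every non-simple tight walk repeats some variable. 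Thus directed \KHP reduces to deciding whether $P$ contains a multilinear monomial of degree $k$.

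To detect such a monomial I would evaluate $P$ over the group algebra $\mathbb{F}[\mathbb{Z}_2^k]$ with $\mathbb{F}=GF(2^{\ell})$, substituting $x_v \mapsto \gamma_v\,(e+g_v)$ for a uniformly random group element $g_v\in\mathbb{Z}_2^k$ and a uniformly random scalar $\gamma_v\in\mathbb{F}$. In characteristic $2$ one has $(e+g_v)^2 = e + g_v^2 = e+e = 0$, since the cross term $2g_v$ vanishes and $g_v^2=e$ in $\mathbb{Z}_2^k$; hence every monomial that repeats a variable evaluates to $0$ deterministically, while the surviving (multilinear) monomials correspond to tight paths. A Schwartz--Zippel argument over the $\gamma_v$ then guarantees that if a tight path exists the evaluation of $P$ is nonzero with probability at least $1/2$, and taking $\ell=\Theta(\log(km))$ or a few independent repetitions drives the error down. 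The evaluation is carried out by a layered dynamic program whose state is the ordered suffix of the last $r-1$ nodes: $T_i[(w_1,\dots,w_{r-1})]$ holds the group-algebra value $\sum \prod_{j\le i} x_{v_j}$ summed over length-$i$ tight walks ending in that window. The only legal extensions of a window $(w_1,\dots,w_{r-1})$ are the nodes $w_r$ with $(w_1,\dots,w_{r-1},w_r)\in E$, so the (state, extension) pairs are in bijection with the hyperedges. Consequently each of the $k$ layers performs exactly $m$ transitions, each of which multiplies an accumulator by the $2$-sparse element $\gamma_{w_r}(e+g_{w_r})$ --- a shift-and-add costing $O(2^k)$ field operations --- for total time $2^k m\cdot\poly(n)$.

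For \KHC I would reduce cycle detection to the path dynamic program by additionally fixing the wrap-around: guess the initial $(r-1)$-window (equivalently, the hyperedge the cycle "starts" with) among the $O(rm)$ windows occurring in edges, run the length-$k$ path DP from that fixed window, and require the final window to close consistently with the guessed start through the wrap edges. This adds a factor of $m$, yielding $2^k m^2\cdot\poly(n)$. I expect the main obstacle to be the design step that keeps the DP indexed by hyperedges: the naive state space of all ordered $(r-1)$-tuples has size $n^{r-1}$, and the crux is to observe that only windows appearing inside some hyperedge can ever be extended, so both the number of reachable states and the number of transitions collapse to $O(m)$. The secondary technical point is verifying that the tight-path polynomial behaves correctly under the substitution --- that directedness makes each path contribute with a count that is odd (hence nonzero mod $2$), and that cancellations among distinct paths do not annihilate $P$ --- which is precisely where the random scalars $\gamma_v$ and the Schwartz--Zippel argument are needed.
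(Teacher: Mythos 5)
Your overall architecture matches the paper's: the tight-walk polynomial whose multilinear degree-$k$ monomials correspond exactly to tight paths, a dynamic program whose states are windows occurring inside hyperedges (so the per-layer work is $O(m)$ rather than $n^{\Theta(r)}$), and the cycle case handled by additionally remembering the starting edge at a factor-$m$ cost. The difference is that the paper feeds its $O(mk)$-size arithmetic circuit into Williams' multilinear-detection theorem as a black box, whereas you inline the group-algebra evaluation yourself --- and it is precisely in the inlined randomization that your proposal has a genuine flaw.

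Per-vertex random scalars $\gamma_v$ cannot prevent cancellation between distinct tight paths on the same node set. If two distinct tight paths span the same set $S$ of $k$ nodes, both contribute the \emph{identical} element $\bigl(\prod_{v\in S}\gamma_v\bigr)\prod_{v\in S}(e+g_v)$ after substitution, and over a field of characteristic $2$ such identical contributions cancel in pairs; Schwartz--Zippel does not apply, because the two paths do not receive independent randomness --- they yield the same monomial in the $\gamma$'s. This is not a corner case: for the directed instances obtained from an undirected hypergraph by including all orientations of every hyperedge (exactly the undirected-to-directed reduction the paper describes), every tight path pairs with its reversal, so the number of tight paths spanning any $S$ is always even, your evaluation is identically zero, and the algorithm answers ``no'' on every yes-instance. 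The repair is to attach the random field scalars to the \emph{transitions} of your DP (equivalently, to the wires of the circuit --- one scalar per hyperedge-extension, not per vertex): distinct tight paths necessarily traverse distinct transition sequences, hence receive distinct monomials in the transition randomness, so the coefficient of each surviving group-algebra element is a nonzero polynomial in those scalars and Schwartz--Zippel applies. Alternatively, do as the paper does and invoke Williams' Theorem 3.1 as a black box on the circuit, which handles this issue internally. A second, smaller omission: $\prod_{v\in S}(e+g_v)$ is nonzero only when the vectors $\{g_v\}_{v\in S}$ are linearly independent over $GF(2)$ (Koutis' lemma, which holds with constant probability roughly $0.28$ for $|S|=k$ random vectors in $\mathbb{Z}_2^k$); your success-probability claim needs this event as well, not only Schwartz--Zippel.
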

Our results show that while \KHP can essentially be solved as fast as \KPT, we know that for the former it is nearly tight, assuming the hardness of \SCO.

Note that once again by the previously described reduction our algorithm could be applied to the easier undirected case with the same running time (for $r=O(1)$). 
Our algorithm is achieved through a reduction to the $k$-Multilinear Monomial Detection problem (\kMLD), where the goal is to detect multilinear monomials of degree $k$ in a polynomial presented as a circuit. 
This problem has been utilized to solve \KPT~\cite{Koutis08,Will09}, and our method can be seen as an extension to \KHP.
Our method uses as a black box an algorithm for \kMLD with running time $2^k s(n) n^{O(1)}$ by Williams~\cite{Will09}, where $s(n)$ is the size of the circuit.

\subsection{Previous Work}
\KPT can be solved naively in time $\tilde{O}(n^k)$, however a long line of work devoted effort to find faster algorithms, starting with $O^*(f(k))$-time algorithms, specifically $O^*(k!)$~\cite{monien85} and $O^*(k!2^k)$~\cite{bod93}, followed by a series of improvements~\cite{kneis06,chen07} with the color coding method~\cite{Alon95} being a notable one. Finally, the fastest methods that were developed~\cite{Koutis08,Will09} utilize \kMLD to create \KPT algorithms with running times $O^*(2^{3k/2})$ and $O^*(2^k)$ respectively.

On the hardness front, it is well known that \KPT requires time $2^{\Omega(k)}\poly(n)$ assuming ETH,
as Hamiltonian path is a special case of this problem and 
there is a reduction from \ProblemName{$3$-SAT} to Hamiltonian Path with the number of nodes in the produced instance linear in the formula size.
If we care about the exact exponent in the running time, only a restricted lower bound is known.
Koutis and Williams~\cite{Koutis16} used communication complexity to show that a faster algorithm for their intermediate problem \kMLD in some settings is not possible, and so among a specific class of algorithms, their $O^*(2^k)$ algorithms for \KPT and \KTR, a generalization of \KPT whose goal is finding an isomorphic copy of a given tree of size $k$ in a given graph, is optimal. Krauthgamer and Trabelsi \cite{KT18} show that \KTR also requires $\Omega^*(2^{k})$ time assuming \SCO requires $\Omega^*(2^{n})$ time. In the other direction, they show that if \SCO on sets of size bounded by $O(\log n)$ can be solved in time significantly faster than $O^*(2^{n})$, then also \NTR can be solved in time significantly faster than $O^*(2^{n})$, where \NTR is \KTR but with a pattern tree that has number of nodes equals $\card{V(G)}$ (thus \NTR generalizes Hamiltonian Path).

\paragraph{The Set Cover Conjecture.} The set-cover conjecture formally states that for every fixed $\varepsilon>0$ there is an integer $\ell=\ell(\varepsilon)>0$ such that \SCO with sets of size at most $\ell$ cannot be solved in time $O^*(2^{(1-\varepsilon)n})$. The conjecture clearly implies that for every fixed $\varepsilon>0$, \SCO cannot be solved in time $O^*(2^{(1-\varepsilon)n})$.
In spite of extensive effort, the fastest algorithm for \SCO is still essentially a dynamic programming algorithm that runs in time $O^*(2^n)$~\cite{fomin04}, with several improvements in special cases~\cite{Koivisto09, Bjorklund09, N16, BHPK17}. Several conditional lower bounds were based on this conjecture in the recent decade, including for \ProblemName{Set Partitioning}, \ProblemName{Connected Vertex Cover}, \ProblemName{Steiner Tree}, \ProblemName{Subset Sum}~\cite{CDLMNOPSW16} (though the last problem was later shown hard conditioned on the strong exponential time hypothesis~\cite{ABHS19}), \ProblemName{Maximum Graph Motif}~\cite{bjor16}, parity of the number of solutions to \SCO with at most $\ell$ sets~\cite{BHH15}, \ProblemName{Colorful Path} and \ProblemName{Colorful Cycle}~\cite{kow16}, the dynamic, general and connected versions of \ProblemName{Dominating Set}~\cite{kri17}, and \KTR~\cite{KT18}.

\section{Preliminaries}
\paragraph{Paths in Hypergraphs.}
As there are several ways to generalize simple paths from ordinary graphs to hypergraphs we define it as follows, extending known definitions from the Hamiltonicity context. Let $H = (V, E)$ be an $r$-uniform
hypergraph. For every $1 \le \ell \le r-1$ and $k$ such that $(r-\ell)$ divides $(k-\ell)$, an $\ell$-overlapping $k$-path is a sequence of $k$ distinct nodes and $(k-\ell)/(r-\ell)$ hyperedges such that each hyperedge consists of $r$ consecutive nodes, and every pair of consecutive edges $e, e'$ satisfies $|e \cap e'| = \ell$. In the case where $\ell = r-1$, such paths are called {\em tight} paths, and
throughout we will focus on such paths. This natural generalization
of paths to hypergraphs has been extensively studied in the area of Hamiltonian paths and cycles,
either in the context of algorithms (e.g., \cite{FKL11,GM16}) or combinatorics (e.g., \cite{DFRS17,HZ15}). In particular,
Lincoln \etal \cite{LWW18} show that finding a tight cycle of size $k$ in $r$-uniform hypergraphs for $r= k -\left\lceil k/r' \right\rceil + 1$
cannot be solved significantly faster than $\tilde{O}(n^k)$ unless \textsf{MAX-$r$-SAT} can be solved significantly faster than $O^*(2^n)$, and also $r'$-uniform \textsf{$k$-HyperClique} can be solved significantly faster than $\tilde{O}(n^k)$.

\paragraph{\SCO, \EXC and \SPA.} For sake of completeness of the text, and to avoid confusion, we give a formal definition for each of the three set covering problems discussed in this paper. The input for all three problems is the same. We are given a ground set $U$ and a family ${\cal S} \subseteq 2^U$ of subsets of $U$. A {\em sub-cover} of $U$ from ${\cal S}$ is a subfamily ${\cal S}'\subseteq {\cal S}$ whose union is $U$. In the \SCO problem the goal is to find a sub-cover of minimal size. The \EXC problem is a decision problem that seeks to find whether there exists some sub-cover composed of pairwise disjoint subsets of $U$, also called a {\em partition}. Finally, the \SPA problem is the optimization variant of \EXC. That is, the goal is to find a partition of minimal size.

\section{A Conditional Lower Bound for \texorpdfstring{\KHP}{KHP}}\label{lowerBound}
This section is devoted to proving Theorem~\ref{th:lowerBound} by presenting a reduction from \SCO to \KHP. 
The reduction is presented in two steps. The first step, in which we prove Lemma~\ref{l:scoToExc}, reduces \SCO to \EXC. 
In the second step, which constitutes the technical crux of the proof, we present a reduction from \EXC to \KHP, thus proving Lemma~\ref{l:excToKhp}. 

\subsection{Reduction from \SCO to \EXC}\label{Section:SCO_EXC}
In this section we show that if \EXC can be solved in time significantly better than $O^*(2^n)$, then so does \SCO, thus proving Lemma~\ref{l:scoToExc}. More formally, we assume that there exist $c, \gamma > 0$ such that \EXC on $n$ elements and $m$ sets can be solved in time $O(m^c2^{(1-\gamma)n})$ and construct an algorithm that solves \SCO on $n$ elements and $m$ sets in time $O(\poly(m)2^{(1 - \delta)n})$ for some $\delta = \delta(c,\gamma) \in (0,1/2)$ to be determined later.

By the self-reducibility property of \SCO, it is enough to show that the decision version of \SCO can be solved in time $O(\poly(m)2^{(1 - \delta)n})$. That is, given a \SCO instance and a threshold $t \in \mathbb{N}$, the goal is to decide if there is a set cover of size at most $t$. We will show that for large values of $t$, this problem can be solved using an algorithm by Nederlof \cite{N16}. For small values of $t$, we use a reduction by Nederlof to the decision version of \SPA, and then reduce this problem to solving many (but not too many) instances of \EXC.
To this end, and following the notation suggested by Nederlof, we refer to instances of the decision variant of \SCO and \SPA as $(n,m,t)$-instances to denote that the number of elements in the ground set is $n$, the number of subsets given is $m$ and the threshold given is $t$.

Assume therefore that we are given an $(n,m,t)$-instance of \SCO. We first note that if $t \ge \sqrt[\leftroot{-2}\uproot{2}4]{\delta} n$, then the following result by Nederlof solves the problem in time $O(\poly(m)2^{(1 - \delta)n})$.

\begin{lemma}[\cite{N16}]\label{l:setCoverLarge}
There is a Monte-Carlo algorithm that takes a \SCO instance on $n$ elements and $m$ sets, as well as an integer $t$, and determines in time $O^*(2^{(1 - (t/n)^4)n})$ if there is a set cover of size $t$.
\end{lemma}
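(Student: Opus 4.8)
The plan is to prove the lemma via the \emph{representation method}, obtaining the speedup over the trivial $O^*(2^n)$ dynamic program precisely in the regime where the target size $t$ is a constant fraction of $n$. By self-reducibility and padding it suffices to \emph{detect} an ordered length-$t$ tuple $(S_1,\dots,S_t)$ of (not necessarily distinct) sets from $\mathcal{S}$ whose union is $U$, which exists iff a sub-cover of size at most $t$ does; one can also count such tuples modulo a random prime to inject randomness. Counting is standard by inclusion--exclusion: the number of length-$t$ tuples with union contained in $W$ is $a(W)^t$ for $a(W)=\card{\{S\in\mathcal{S}:S\subseteq W\}}$, so the number with union exactly $U$ is $\sum_{W\subseteq U}(-1)^{n-|W|}a(W)^t$, which however only yields the $O^*(2^n)$ bound. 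To do better I would set up a meet-in-the-middle: split a size-$t$ cover into two halves $\mathcal{C}_1,\mathcal{C}_2$ of $t/2$ sets each, let $A=\bigcup\mathcal{C}_1$ and $B=\bigcup\mathcal{C}_2$ be the portions they cover, and observe that the cover is valid iff $A\cup B=U$. The naive version enumerates all $2^n$ candidate covered sets $A$, so the whole point is to enumerate far fewer.

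The key idea is that one fixed cover admits $\binom{t}{t/2}\approx 2^{t}=2^{\alpha n}$ distinct splits (with $\alpha=t/n$), so it has exponentially many \emph{representations} as a left/right pair. I would exploit this by imposing a random filter on the left half: sample a random test set $R\subseteq U$ of a density to be optimized together with a random target pattern, and only enumerate those left covered sets $A$ whose trace $A\cap R$ matches the pattern. Tuning $\card{R}=\beta n$ so that roughly a $2^{-\beta n}$ fraction of candidates survive, the enumeration drops to $2^{(1-\beta)n}$, while the abundance of representations should guarantee that, with the filter calibrated correctly, at least one split of any genuine cover survives with non-negligible probability. For each surviving $A$ one then tests, using fast zeta/M\"obius (subset) transforms over the $2^{(1-\beta)n}$ retained indices, whether the complementary part $U\setminus A$ can be covered by some admissible right half, thereby detecting $A\cup B=U$; repeating the construction a polynomial number of times with fresh randomness amplifies the one-sided error and yields the Monte Carlo guarantee.

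The main obstacle I anticipate is the quantitative heart of the argument: proving the \emph{survival/isolation} bound — that with filter density $\beta$ a uniformly random pattern is matched by the trace $A\cap R$ of some representation with probability bounded below — and simultaneously balancing it against the enumeration cost, so as to read off the admissible $\beta$ as a function of $\alpha$. It is this optimization that should produce the precise exponent $\beta=(t/n)^4$ rather than a weaker saving; I expect it to require a careful, possibly two-level, choice of filter together with a second-moment (variance) estimate on the number of distinct traces $A\cap R$ achievable over the $2^{\alpha n}$ splits, which is exactly where the fourth power enters. The remaining ingredients — the inclusion--exclusion identity, the subset-transform bookkeeping, and the probability amplification — are standard, and I would treat them as routine.
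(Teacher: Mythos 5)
First, some context: the paper does not prove Lemma~\ref{l:setCoverLarge} at all --- it is quoted from Nederlof~\cite{N16} and used as a black box, so there is no in-paper argument to compare against, and your attempt is really a reconstruction of the main theorem of \cite{N16}. To your credit, you name the correct technique (the ``representation method'' is literally the subject of that paper), and your skeleton --- meet-in-the-middle over half-covers, with a random trace filter whose legitimacy is supposed to come from the $\binom{t}{t/2}\approx 2^{t}$ ways of splitting a single solution --- is the right starting point.

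Nevertheless, the proposal has a genuine gap, and it is larger than the one you flag yourself. The isolation bound and the optimization ``where the fourth power enters'' are not a technical loose end to be filled in later: they are the entire quantitative content of the lemma, and in your write-up the exponent $(t/n)^4$ is reverse-engineered from the statement rather than derived. Worse, the calibration you propose provably fails as stated. Consider a skewed minimal cover: one huge set plus $t-1$ near-singletons. Its $\approx 2^{t}$ splits are distinguished only by their private elements, of which only about $(t/n)\beta n$ land in a random $R$ of size $\beta n$; hence the left unions induce only $\approx 2^{(t/n)\beta n}$ distinct traces on $R$, so a uniformly random pattern is matched with probability roughly $2^{-(1-t/n)\beta n}$ --- exponentially small, not ``non-negligible.'' One can repeat to compensate, but then all savings must come from the per-round cost, and there your ``routine'' step breaks: deciding, for every retained $A$ with $A\cap R=T$, whether $U\setminus A$ can be covered by $t/2$ sets does \emph{not} localize to the $2^{(1-\beta)n}$ retained indices, because the sets of $\mathcal{S}$ have arbitrary traces on $R$; the natural inclusion--exclusion/zeta computation pays an extra factor of order $2^{|R\setminus T|}$ (and the symmetric test for $A$ pays $2^{|T|}$). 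Balancing these costs against the survival probability, the savings cancel for every $t$ below a constant fraction of $n$ (around $n/2$ in the natural accounting) --- which is exactly the regime in which this paper invokes the lemma, namely $t\ge \delta^{1/4}n$ for small $\delta$. Closing these two holes requires genuinely new ideas beyond a second-moment estimate --- for instance, making the two halves exact complements by passing to \SPA (note that the other lemma this paper imports from \cite{N16} is precisely such a reduction, at a $2^{\delta n}$ cost in the number of sets) and a separate treatment of skewed solutions --- and that extra machinery is why the final savings in \cite{N16} are only the modest $(t/n)^4$ rather than what your optimistic accounting would suggest. So: right method and an honestly flagged obstacle, but the sketch would not survive the worst case even if the isolation bound you hope for were granted in the form you state it.
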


Assume therefore that $t \le \sqrt[\leftroot{-2}\uproot{2}4]{\delta} n$.
Applying the following reduction by Nederlof, we construct an $(n,m',t)$-instance of \SPA with $m'\le m2^{\delta n}$.
\begin{lemma}[\cite{N16}]
There is an algorithm that, given a real $0<\delta<1/2$, takes an $(n,m,t)$-instance of \SCO as input and outputs an equivalent $(n,m',t)$-instance of \SPA with $m'\leq m2^{\delta n}$ sets in time $O(m2^{(1-\delta)n})$.
\end{lemma}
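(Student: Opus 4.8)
The plan is to prove this purely as an instance-level reduction that keeps the ground set $U$ and the threshold $t$ untouched and only replaces the family $\mathcal{S}$ by an enriched family $\mathcal{S}'$. The guiding observation is that a partition is just a set cover with no overlaps, and that any set cover can be made overlap-free by \emph{discarding} elements: if $S_{i_1},\dots,S_{i_t}$ cover $U$, fix an arbitrary order on them and assign each element to the first set in the order that contains it. This yields pairwise disjoint parts $T_1,\dots,T_t$ with $T_j\subseteq S_{i_j}$ and $\bigcup_j T_j=U$, i.e.\ a partition of size at most $t$ whose parts are subsets of the original sets. Conversely, any partition with each part a subset of some $S\in\mathcal{S}$ immediately lifts to a cover of no larger size by replacing each part with a containing original set. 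Thus if I let $\mathcal{S}'$ consist of subsets of members of $\mathcal{S}$ and ensure it is rich enough to contain the trimmed parts of every cover, the two instances will agree for every threshold, in particular for $t$.

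The first, naive realization is to take $\mathcal{S}'=\{\,T : T\subseteq S \text{ for some } S\in\mathcal{S}\,\}$, i.e.\ to throw in every subset of every set. This is trivially rich enough and satisfies the subset property needed for the converse, so correctness would be immediate. The problem is purely quantitative: a single set of size close to $n$ already contributes about $2^{n}$ subsets, so $m'$ could be as large as $m\,2^{n}$, which is useless. The entire content of the lemma is therefore to cut this blow-up down to $2^{\delta n}$ while still realizing valid refinements.

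To do so I would partition $U$ into $q=\lceil 1/\delta\rceil$ blocks $B_1,\dots,B_q$, each of size at most $\lceil\delta n\rceil$, and restrict the subsets we add so that, per original set, only about $2^{\delta n}$ variants are ever produced: intuitively, the trimming of a given set should be describable within a single block-sized window rather than by an unrestricted subset of $U$. The aim is that $\mathcal{S}'$ contain, for each $S\in\mathcal{S}$, only the trimmed versions of $S$ arising in a \emph{canonical} overlap-resolution, giving at most $q\cdot 2^{\lceil\delta n\rceil}\le 2^{\delta n+O(\log(1/\delta))}$ candidates per set and hence $m'\le m\,2^{\delta n}$ after absorbing the lower-order factor into the block size. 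Since the output then has at most $m\,2^{\delta n}\le m\,2^{(1-\delta)n}$ sets (as $\delta<1/2$) and each is emitted in polynomial time, the running time falls within the claimed $O(m\,2^{(1-\delta)n})$.

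The main obstacle, and where the real work lies, is exactly this blow-up control: one must show that the \emph{restricted} family still realizes a valid size-$\le t$ partition for \emph{every} coverable instance, i.e.\ that some canonical trimming of any size-$t$ cover uses only the block-structured subsets placed in $\mathcal{S}'$ \emph{without ever increasing the number of parts beyond $t$}. Preserving the count is the subtle point: refining the universe block by block would split one source set into several parts and inflate the solution size, so the refinement must be organized so that each chosen set still contributes a single part while its deletions stay within the allotted $\delta n$ budget. Verifying that such a canonical refinement always exists is the crux; once it is established, the two equivalence directions of the first paragraph and the counting of $\mathcal{S}'$ are routine.
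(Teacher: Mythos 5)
First, a point of context: the paper does not prove this lemma at all --- it is imported verbatim from Nederlof~\cite{N16} and used as a black box --- so your attempt must stand on its own as a reconstruction of Nederlof's argument, and it does not. Your opening observations (a cover can be trimmed to a partition whose parts are subsets of the chosen sets, and any such partition lifts back to a cover) and your diagnosis that the entire difficulty is controlling the blow-up of the family are both correct. But the construction you then sketch --- partition $U$ into $q=\lceil 1/\delta\rceil$ blocks and admit, for each $S\in\mathcal{S}$, only trimmed versions whose deletions fit inside a single block-sized window (in particular, deletions of size at most about $\delta n$) --- cannot work; the step you defer as ``the crux'' is not merely unproven, it is false for this family. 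Concretely, take $U=[n]$, $\mathcal{S}=\{S_1,S_2\}$ with $S_1=U\setminus\{n\}$, $S_2=U\setminus\{1\}$, and $t=2$; this is a `yes' instance of \SCO. In any partition of $U$ into parts that are subsets of members of $\mathcal{S}$, the part containing $1$ must be a subset of $S_1$, the part containing $n$ must be a subset of $S_2$, these are distinct parts, and their deletions total $|S_1|+|S_2|-n=n-2$. Hence some set must be trimmed by at least $(n-2)/2$ elements, which exceeds $\delta n$ once $n>2/(1-2\delta)$, and these deleted elements moreover sprawl across many blocks. No such trimmed set belongs to your family $\mathcal{S}'$, so the \SPA instance you output is a `no' instance and the reduction answers incorrectly. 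Any per-set deletion budget of $\delta n$, block-structured or not, dies on this example, so your plan to have ``deletions stay within the allotted $\delta n$ budget'' while ``each chosen set still contributes a single part'' is unachievable.

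Second, the shape of the claimed bounds already signals that the construction must be of a different nature than what you propose: the running time $O(m2^{(1-\delta)n})$ is exponentially larger than the output size $m2^{\delta n}$ (since $\delta<1/2$), so the algorithm cannot be ``emit each candidate in polynomial time''; the real work is a global computation that decides which few trimmed sets to keep, among candidates whose deletions may be nearly all of $U$. Reconciling these two demands --- the family must contain heavily-trimmed sets like $\{n\}=S_2\setminus(U\setminus\{1,n\})$ above, yet contribute only $2^{\delta n}$ sets per original set --- is precisely the content of Nederlof's representation-method argument, and nothing in your sketch addresses it. As it stands, the proposal establishes only the easy equivalences and the (correct) observation that the naive family is too large; the actual lemma remains unproved, and the specific repair you outline is refuted by the example above.
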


Given the $(n,m',t)$-instance of \SPA constructed above we then construct $n 2^{2t}$ instances of \EXC by repeating the following color-coding scheme independently at random. Sample a random coloring $f : m' \to [t]$ of the input sets to $t$ colors. Add $t$ new elements, each one associated with a different color, and add each such element to all subsets of that color. Note that this \EXC instance has $n' = n + t \le (1 + \sqrt[\leftroot{-2}\uproot{2}4]{\delta}) n$ elements, and $m'\le m2^{\delta n}$ sets. The following lemma shows that this reduction reduces \SPA to \EXC.

\begin{lemma}
If the \SPA is a 'yes' instance, then with high probability at least one of the \EXC instances is a 'yes' instance. Conversely, if one of the \EXC instances is a 'yes' instance, then the \SPA instance is a 'yes' instance with certainty.
\end{lemma}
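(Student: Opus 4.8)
The plan is to establish the two implications separately; the converse is deterministic while the forward direction is a standard colorful-selection argument over the random colorings. The organizing observation I would record first is a tight correspondence between exact covers of the extended ground set $U' = U\cup\{c_1,\dots,c_t\}$ and colorful size-$t$ partitions of $U$: every set in an \EXC instance has the form $S\cup\{c_{f(S)}\}$ and thus contains exactly one color element, namely the one matching its own color. Hence an exact cover of $U'$ must select exactly one set of each color in order to cover $c_1,\dots,c_t$ each exactly once, so it consists of precisely $t$ sets whose $U$-parts are pairwise disjoint and cover $U$.

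For the converse direction I would invoke this observation directly: from an exact cover of $U'$ in any single \EXC instance, deleting the color elements leaves $t$ pairwise disjoint subsets of $U$ whose union is $U$, i.e.\ a partition of $U$ of size exactly $t$. This witnesses a 'yes' for the \SPA instance, and since the argument uses no randomness it yields the conclusion with certainty.

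For the forward direction I would fix a partition $\{S_1,\dots,S_t\}$ of $U$ witnessing the 'yes' \SPA instance and argue over the $n2^{2t}$ independent colorings. Under one uniformly random coloring of the $m'$ sets by $t$ colors, the values $f(S_1),\dots,f(S_t)$ are i.i.d.\ uniform on $[t]$, so they are pairwise distinct with probability $t!/t^t\ge e^{-t}\ge 2^{-2t}$; when this 'colorful' event occurs the sets $S_j\cup\{c_{f(S_j)}\}$ are pairwise disjoint with union exactly $U'$, so that \EXC instance is a 'yes'. Over $N=n2^{2t}$ independent trials the probability that this fixed partition is never colorful is at most $(1-2^{-2t})^{n2^{2t}}\le e^{-n}$, which is the claimed high probability.

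The main thing to get right is the calibration between the colorful probability and the number of repetitions: the factor $2^{2t}$ is chosen exactly to cancel the $\ge 2^{-2t}$ colorful probability, and the remaining factor $n$ drives the single-witness failure probability down to $e^{-n}$; no union bound over partitions is needed, since one fixed witness suffices. I would also note that, because the gadget adds exactly $t$ color elements, an exact cover forces a partition of size \emph{exactly} $t$, so the forward direction relies on a size-$t$ witness; a partition of size strictly below $t$ is accommodated by running the scheme for each target size, which multiplies the number of \EXC instances only by a polynomial factor.
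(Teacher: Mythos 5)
Your proof is correct and follows the same color-coding route as the paper: a deterministic converse in which disjointness of the cover forces distinct colors, and a forward direction bounding the rainbow probability by $t!/t^t \ge e^{-t} \ge 2^{-2t}$ and amplifying over the $n2^{2t}$ independent instances. The one place where you genuinely diverge is the treatment of partitions of size $\ell < t$, and there your version is the more careful one. The paper takes a witness partition of size $\ell \le t$, shows its sets are rainbow-colored with probability at least $t!/\bigl((t-\ell)!\,t^\ell\bigr) \ge t!/t^t$, and then asserts that the corresponding \EXC instance has an exact cover; but when $\ell < t$ this assertion fails, since the $t-\ell$ color elements whose colors are missed by the partition cannot be covered at all. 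Your organizing observation makes this plain: exact covers of the augmented ground set correspond precisely to rainbow partitions of $U$ of size \emph{exactly} $t$. For a concrete counterexample, take $U=\{a,b\}$, the single set $\{a,b\}$, and $t=2$: the \SPA instance is a 'yes', yet every derived \EXC instance is a 'no', so the forward implication as stated by the paper does not hold for arbitrary \SPA instances. Your remedy --- running the scheme once for each target size $\ell \in [t]$, with $n2^{2\ell}$ instances on $n+\ell \le n+t$ elements each --- multiplies the number of \EXC instances by at most $t \le n$, which the surrounding runtime calculation absorbs into the $O^*(\cdot)$ bound, so Lemma~\ref{l:scoToExc} is unaffected. (The paper's argument could alternatively be salvaged by showing that the \SPA instances produced by Nederlof's reduction admit a partition of size exactly $t$ whenever they admit one of size at most $t$, e.g.\ if the output family were closed under taking subsets, but no such property is invoked in the text.) In short: same approach, but your patch of the $\ell<t$ case repairs a real gap in the paper's own proof.
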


\begin{proof}
Assume first that the $(n,m',t)$-instance of \SPA is a 'yes' instance. Then there exists a partition of size $\ell \le t$. With probability at least $\frac{t!}{(t - \ell)!t^\ell} \ge \frac{t!}{t^t} \ge e^{-t}$, a uniform coloring of the sets colors each set of the partition with a different color. Therefore for every new element added, there is at most one set in the partition with the same color, and therefore there is an exact cover in the corresponding instance. Since the reduction constructs $n \cdot 2^{2t}$ independent instances, with high probability at least one will be a 'yes' instance.
Conversely, assume that one of the \EXC instances is a 'yes' instance, and consider an exact cover for this instance. Since the sets in the cover are disjoint, for every new element added, there is at most one subset in the cover containing it. We can therefore conclude that no two subsets are colored with the same color. Removing the new elements yields a feasible solution to the \SPA instance in which every subset is colored differently. Therefore the cover consists of at most $t$ subsets.
\end{proof}
By our original assumption, each of the \EXC instances can be solved in time $m'^c \cdot 2^{(1-\gamma)n'}$. Therefore the total time required to solve all instances is at most
$$n2^{2t} \cdot O\left(m'^c \cdot 2^{(1-\gamma)n'}\right) \le O\left(n2^{2\sqrt[\leftroot{-2}\uproot{2}4]{\delta} n} \cdot m^c2^{c \delta n} \cdot 2^{(1 - \gamma)(1 + \sqrt[\leftroot{-2}\uproot{2}4]{\delta}) n}\right) = O^*\left(2^{(2\sqrt[\leftroot{-2}\uproot{2}4]{\delta} + c \delta + (1 - \gamma)(1 + \sqrt[\leftroot{-2}\uproot{2}4]{\delta}))n} \right)\;.$$
For small enough choice of $\delta$ (depending only on $\gamma$ and $c$), this is at most $O\left(nm^c2^{(1- \delta)n}\right)$. This completes the proof of Lemma~\ref{l:scoToExc}.

\subsection{Reduction from \texorpdfstring{\EXC}{EXC} to \texorpdfstring{\KHP}{KHP}}\label{ProofsKCN}

In this section we show that if \KHP can be solved in time significantly better than $O^*(2^k)$, then so does \EXC, thus proving Lemma~\ref{l:excToKhp}. More formally, we show that if there exists a constant $\gamma > 0$ and an integer $r \ge 3$ such that \KHP on can be solved in time $O^*(2^{(1-1/(r-2)-\gamma)k})$ then \EXC on $n$ elements and $m$ sets can be solved in time $O(\poly(m)2^{(1 - \delta)n})$ for some $\delta = \delta(\gamma) \in (0,1/2)$ to be determined later.

To this end, let $X = \{x_1,\ldots,x_n\}$ and ${\cal S} = \{S_1,\ldots,S_m\}$ be an \EXC instance, and let $r \ge 3$ be fixed. We will present a procedure that runs in time polynomial in $m,n$ and constructs a \KHP instance on an undirected $r$-uniform hypergraph $H=(V_H,E_H)$ with $|V_H|=O(mn)$ nodes and $k = O(n)$.
We first present the construction under the following assumption. For sake of fluency we will show how to discard it after presenting the construction.
\begin{assumption} \label{ass:modr}
$n \ge 4r$, $r-2$ divides $n+2$, and for every $i \in [m]$, $|S_i| \ge 2r$.
\end{assumption}
We start by defining the vertex set $V_H$. For every $j \in [n]$, the hypergraph has a node labeled $x_j$ (the distinction between the node and the corresponding element will be clear from the context). In addition, for every $i \in [m]$ and $j \in [|S_i|]$ we define a node labeled $u_i^j$. The former set of nodes will be referred to as {\em element-nodes} and the latter one as {\em set-nodes}. Finally, we add two more element-nodes labeled $x_{start}, x_{end}$ and two set-nodes labeled $u_{start}, u_{end}$.
Next we turn to define the set $E_H$ of hyperedges, also demonstrated in Figure~\ref{Figures:reduction}. Loosely speaking $H$ has four types of edges. {\em Internal hyperedges} consist of nodes associated with a single set, {\em transition edges} consist of nodes associated with a two disjoint sets, {\em starting hyperedges} (resp. {\em ending hyperedges}) consist of nodes associated with a single set together with a starting (resp. ending) node.

\begin{figure}[!ht]
	\centering
		\includegraphics[width=1.0\textwidth]{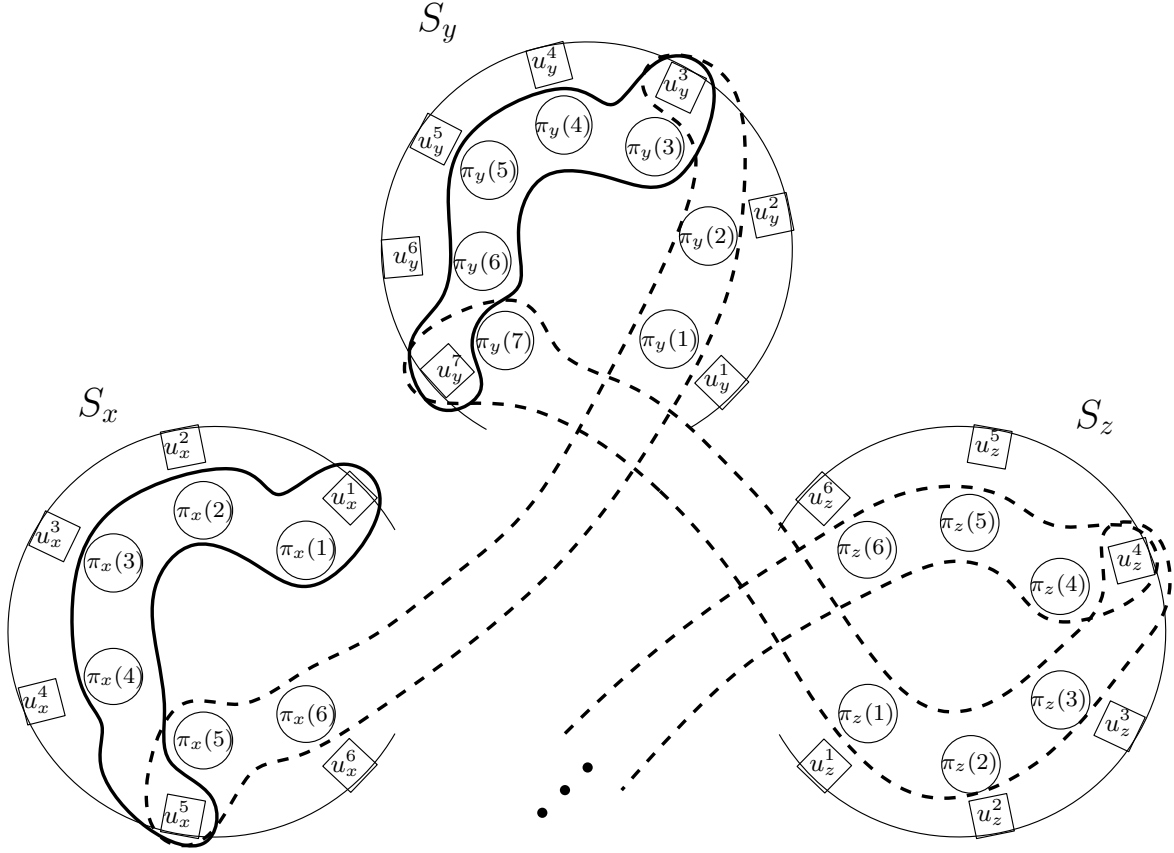}
   \caption[-]{An illustration of the reduction with $r=6$. The figure portrays three disjoint edges, as well as two heavy internal edges and two heavy transition edges. Round nodes represent element-nodes, which are associated with the ground set of elements $X$, and square nodes represent set-nodes, that are associated with the sets $S_1,S_2,S_3$. Continuous curves correspond to internal hyperedges, while dashed curves correspond to transition hyperedges.
   }
   \label{Figures:reduction}
\hrule
\end{figure}

To define the edges formally, for every $i \in [m]$ and $j \in [|S_i|]$ we let $x_i^j$ denote the $j$th element in $S_i$ \footnote{e.g. if $S_2 = \{x_3,x_5,x_{11}\}$ then $x_2^3=x_{11}$.}. We emphasize that $x_i^j$ is simply a new notation for an element in $X$. Therefore while for every $i \ne i'$ and $j \in [|S_{i}|], j' \in [|S_{i'}|]$, $u_i^j \ne u_{i'}^{j'}$, it might be the case that $x_i^j = x_{i'}^{j'}$.
Loosely speaking, the element-nodes are a "common resource" shared by all sets, while each set has "its own" set-nodes. Finally, although the edges are undirected, intuitively it might prove useful for the reader to think of the edge as a sequence, rather than a set.

Every internal edge contains a sequence of either $r-2$ or $r-1$ elements-nodes, all belonging to the same set. In the former case, the edge begins and ends with a set-node. In the latter, the edge contains a set-node in the midst of the sequence. Formally, for every $i \in [m]$ we define internal edges as follows.
\begin{align*}
&\{u_i^j\} \cup \{x_i^\ell\}_{\ell = j}^{j+r-3} \cup\{u_i^{j+r-2}\} \quad \quad & \quad \quad \forall  j \in [|S_i| - (r-2)] \\
&\{x_i^\ell\}_{\ell=j}^{j+h} \cup \{u_i^{j+h+1}\} \cup \{x_i^\ell\}_{\ell=j+h+1}^{j+r-2} \quad \quad & \quad \quad \forall j \in [|S_i| - (r-2)], h \in [0,r-3]
\end{align*}

Let $1 \le i<i' \le m$ be such that $S_i,S_{i'}$ are disjoint. A transition edge contains a sequence of either $r-2$ or $r-1$ element-nodes, the first part of which consists of elements of $S_i$ and ends with $x_i^{|S_i|}$, and the second is either empty or consists of elements of $S_{i'}$ starting with $x_{i'}^1$. In the former case, the edge begins with a set-node associated with $S_i$ and ends with a set-node associated with $S_{i'}$. In the latter, the edge either contains a set-node associated with $S_i$ in the midst of the subsequence associated with $S_i$ elements or a set-node associated with $S_{i'}$ in the midst of the subsequence associated with $S_{i'}$ elements. Formally, for all $i < i'$ such that $S_i \cap S_{i'}= \emptyset$ the transition hyperedges are defined as follows.
\begin{align*}
&\{u_i^{|S_i|-j}\} \cup \{x_i^\ell\}_{\ell = |S_i|-j}^{|S_i|} \cup \{x_{i'}^\ell\}_{\ell = 1}^{r-3-j} \cup\{u_{i'}^{r-2-j}\}  &  \forall j \in [0, r-3] \\
&\{x_i^\ell\}_{\ell=|S_i| - j}^{|S_i| - j + h} \cup \{u_i^{|S_i|-j+h+1}\} \cup \{x_i^\ell\}_{\ell=|S_i|-j+h+1}^{|S_i|} \cup\{x_{i'}^{\ell}\}_{\ell=1}^{r-j-2}   & \forall j \in [r-3], h \in [0,j-1] \\
&\{x_i^\ell\}_{\ell=|S_i| - j+1}^{|S_i|} \cup \{x_{i'}^\ell\}_{\ell=1}^{h}\cup \{u_{i'}^{h+1}\} \cup\{x_{i'}^{\ell}\}_{\ell=h+1}^{r-j-1}  & \forall j \in [r-3], h \in [r-j-2]
\end{align*}
Finally, we define the starting and ending edges. Intuitively, these edges are destined to be the start and ending of the desired path. Formally, for every $i \in [m]$ we define the following.
\begin{align*}
&\{u_{start},x_{start}\} \cup \{x_i^\ell\}_{\ell=1}^{r-3} \cup \{u_i^{r-2}\}\\
&\{x_{start}\} \cup \{x_i^\ell\}_{\ell=1}^{r-2} \cup \{u_i^{r-2}\}\\
&\{u_i^{|S_i|-(r-3)}\} \cup \{x_i^\ell\}_{\ell=|S_i|-(r-4)}^{|S_i|} \cup \{u_{end},x_{end}\}\\
&\{u_i^{|S_i|-(r-3)}\} \cup \{x_i^\ell\}_{\ell=|S_i|-(r-3)}^{|S_i|} \cup \{x_{end}\}
\end{align*}
To complete the \KHP instance we set $k=(n+2)(1+1/(r-2))+1$. Note that by Assumption~\ref{ass:modr}, $k \ge 3r$ is a positive integer. The following claim is straightforward.
\begin{claim}\label{c:polyTime}
Given the \EXC instance $X,{\cal S}$, we can construct $H,k$ in polynomial time.
\end{claim}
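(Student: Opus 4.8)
The plan is to verify that the construction described above can be carried out in time polynomial in $m$ and $n$, which amounts to bounding the number of nodes and the number of hyperedges, and checking that each individual edge is produced by a routine enumeration. First I would count the nodes: there are $n$ element-nodes labeled $x_j$, the two special element-nodes $x_{start}, x_{end}$, the two special set-nodes $u_{start}, u_{end}$, and for each set $S_i$ exactly $|S_i| \le n$ set-nodes $u_i^j$. Since $\sum_{i=1}^m |S_i| \le mn$, we get $|V_H| = O(mn)$, matching the claim, and all of these labels can be written down in $O(mn)$ time.

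Next I would bound the number of hyperedges of each of the four types. The internal edges are indexed, for each $i \in [m]$, by $j \in [|S_i|-(r-2)]$ together with $h \in [0,r-3]$ (plus the family with no $h$), so there are $O(r \cdot |S_i|)$ of them per set, giving $O(r \cdot mn)$ internal edges in total. The starting and ending edges are indexed by $i \in [m]$ only, contributing $O(m)$ edges. The transition edges are the most numerous: they range over ordered disjoint pairs $i < i'$, of which there are $O(m^2)$, and for each such pair over $O(r^2)$ choices of the auxiliary indices $j, h$; this yields $O(r^2 m^2)$ transition edges. Since $r$ is a fixed constant, the total is $|E_H| = O(m^2 n)$, which is polynomial. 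Each edge is a set of exactly $r$ labeled vertices read off directly from the index data, so it can be emitted in $O(r) = O(1)$ time, and a disjointness test $S_i \cap S_{i'} = \emptyset$ costs $O(n)$; hence the whole edge set is generated in $\poly(m,n)$ time.

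Finally I would confirm that the parameter $k = (n+2)(1 + 1/(r-2)) + 1$ is computed in constant arithmetic time and, as already noted in the paragraph preceding the claim, is a well-defined positive integer under Assumption~\ref{ass:modr} (since $r-2$ divides $n+2$, the term $(n+2)/(r-2)$ is an integer). I do not anticipate any genuine obstacle here: the statement is deliberately a bookkeeping lemma isolating the polynomial-time bound so that the combinatorial correctness of the reduction can be argued separately. The only point requiring a moment's care is the quadratic blow-up from the transition edges, so I would make sure to emphasize that the $O(m^2)$ factor coming from ranging over pairs of sets is still polynomial and does not interfere with the eventual running-time bound, whose exponential savings are governed entirely by $k = O(n)$ rather than by the polynomial size of $H$.
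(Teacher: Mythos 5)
Your proposal is correct: the paper states this claim without proof (calling it straightforward), and your node/edge counts --- $O(mn)$ nodes, $O(rmn)$ internal edges, $O(r^2m^2)$ transition edges, $O(m)$ starting/ending edges, each emitted by direct enumeration, plus the trivially computed integer $k$ --- are exactly the routine bookkeeping the paper leaves implicit. Nothing is missing and nothing deviates from the intended argument.
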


We will now show that $H$ contains a path of length $k$ if and only if the \EXC instance is a 'yes' instance. We first show that if $H$ contains a path of length $k$ then the \EXC instance is a 'yes' instance. To this end, assume there is a tight path $$P = v_1-v_2-\ldots-v_k$$ of length $k$ in $H$. Let ${\cal S}_P \subset {\cal S}$ be the collection of all sets in ${\cal S}$ that have a set-node in $P$. Formally, let ${\cal I}_P \eqdef \{i \in [m] : \exists j \in [|S_i|]. \; u_i^j \in P\}$ then ${\cal S}_P \eqdef \{S_i : i \in {\cal I}_P\}$. We will show the following.
\begin{lemma}\label{l:exactCoverFromPath}
${\cal S}_P$ is an exact cover of $X$.
\end{lemma}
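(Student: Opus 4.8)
The plan is to show that the tight-path constraint is rigid enough that any length-$k$ path must, up to reversal, be the ``canonical'' path that walks through a subfamily of the sets one at a time, reading off each set's elements in increasing index order; the exact value of $k$ then forces that subfamily to be an exact cover. First I would record the only local information available: inspecting the four edge families, every hyperedge contains either one or two set-nodes, the two-set-node edges (internal, transition, starting, ending, all of the first kind) place their two set-nodes in the two extreme slots of the $r$-tuple, and the one-set-node edges place their set-node strictly in the interior, the sole exceptions being the second-kind starting (resp.\ ending) edge, whose lone set-node occupies an extreme slot accompanied by $x_{start}$ (resp.\ $x_{end}$). Reading $P=v_1-\cdots-v_k$ as a string over $\{\text{set-node},\text{element-node}\}$, this yields at once that no $r$ consecutive vertices are all element-nodes and no window of $r$ consecutive vertices carries three set-nodes. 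Hence if $p_1<\cdots<p_S$ are the set-node positions, consecutive ones are at distance at most $r-1$; and a run of exactly $r-1$ element-nodes strictly inside $P$ would make the two windows straddling it a second-kind ending edge and a second-kind starting edge, forcing the single interior position $p_t+1$ to be simultaneously $x_{start}$ and a genuine set-element $x_i^{\ell}$, a contradiction (here $k\ge 3r$ guarantees the run is genuinely interior). So every interior gap equals $r-2$, and runs of $r-1$ element-nodes can only hang off the two ends of $P$.

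Next I would run the counting. Writing $b$ (resp.\ $a$) for the number of element-nodes before $p_1$ (resp.\ after $p_S$), the element-node count is $E=a+b+(S-1)(r-2)$ and $S+E=k$. Substituting $k=(n+2)\frac{r-1}{r-2}+1$ gives $E=(n+2)+\frac{a+b}{r-1}$. Since the hypergraph contains only $n+2$ element-nodes, $E\le n+2$ forces $a=b=0$, hence $E=n+2$ and $S=\frac{n+2}{r-2}+1$. In particular $P$ begins and ends with a set-node and visits every element-node, namely all of $x_1,\dots,x_n$ together with $x_{start}$ and $x_{end}$, each exactly once.

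It then remains to extract the exact cover. Because $x_{start}$ lies on $P$ and occurs only in starting edges (at slot $1$ or $2$), while $v_1$ is a set-node, the first window must be a first-kind starting edge; thus $v_1=u_{start}$ and $P$ enters some set $S_{i_1}$ at its first element $x_{i_1}^1$, and symmetrically $v_k=u_{end}$. From here I would induct along $P$: each interior two-set-node window is internal or transition, and Assumption~\ref{ass:modr} ($|S_i|\ge 2r$) rules out a transition before a set has been read to its end, so once $P$ enters $S_i$ at index $1$ it traverses $x_i^1,x_i^2,\dots,x_i^{|S_i|}$ consecutively and in order (the set-node superscripts advancing by $r-2$ per internal edge) before a transition edge carries it into a new, disjoint set, and it can never re-enter $S_i$, since that would revisit $x_i^1$. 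Consequently $P$ decomposes into position-disjoint blocks, one per set of $\mathcal{S}_P$, the block of $S_i$ carrying precisely the element-nodes of $S_i$. Distinctness of path vertices then makes the members of $\mathcal{S}_P$ pairwise disjoint, and since their element-nodes are exactly the $n$ real element-nodes on $P$ we get $\sum_{S_i\in\mathcal{S}_P}|S_i|=n=|X|$; disjoint subsets of $X$ whose sizes sum to $|X|$ partition $X$, so $\mathcal{S}_P$ is an exact cover.

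The hard part will be the rigidity in the last paragraph together with the spacing claim underpinning it: converting the purely local ``one or two set-nodes per edge, at prescribed slots'' information into the global statement that $P$ reads complete sets in order. The two delicate points are verifying under the tight-overlap constraint that a two-set-node window really does force its set-nodes to the extreme slots, and that the lower bound $|S_i|\ge 2r$ genuinely precludes a premature transition; once these are secured, the counting and the cover-plus-disjointness conclusions are routine.
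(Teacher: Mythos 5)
Your overall plan is the same as the paper's (spacing of set-nodes, counting with the exact value of $k$ to force the end-runs to vanish, then rigidity forcing each set to be read contiguously and in order), and your counting step is correct --- it is essentially the paper's Lemma~\ref{l:setElementNodesPath}. The gap is at the foundation, and it is exactly the point you flag at the end and then leave "to be secured": the hyperedges are \emph{undirected}, i.e.\ they are sets of $r$ nodes, so the written order in the construction carries no slot information whatsoever. "Inspecting the four edge families" therefore cannot yield that a light window of $P$ has its set-node strictly in the interior, or that a heavy window has its two set-nodes at the extremes: a priori the path may traverse the vertices of, say, a light internal edge $\{u_i^s\}\cup\{x_i^{\ell}\}_{\ell=j}^{j+r-2}$ with $u_i^s$ appearing first in the window. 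Consequently your argument ruling out an interior run of $r-1$ element-nodes (that the two straddling windows "must be" a second-kind ending edge and a second-kind starting edge, forcing $x_{start}$ to coincide with a set element) does not follow from anything you have established, and the same objection applies to the induction in your last paragraph (entry at $x_{i_1}^1$, superscripts advancing by $r-2$ per edge, no premature transitions).

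What is missing is the tool the paper uses to convert set-edges into positional information along the path: the pairwise intersection analysis of Claim~\ref{c:edgeIntersection} --- in particular, that two distinct light edges sharing a set-node must share at least two element-nodes, while two heavy edges sharing a set-node share nothing else. Overlapping windows of a tight path share $r-1$ vertices, so these facts, applied to the windows around a hypothesized gap of $r-1$ element-nodes, give the contradiction (this is the paper's unnumbered spacing lemma), and, applied inductively along $P$, they pin down the neighbors $v_{\alpha\pm 1}$ of each set-node (Lemma~\ref{l:elementNodeFollowingSetNode}, Proposition~\ref{p:elementNodesOrder}, Corollary~\ref{cor:elementNodesConsecutive}), which is precisely the rigidity your final paragraph asserts. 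Since both of your "delicate points" constitute the technical crux of the lemma, and the justification you offer for them (reading slots off the edge definitions) is invalid for undirected hypergraphs, the proposal has a genuine gap, even though the skeleton around it --- the counting and the final extraction of a disjoint cover whose union is $X$ --- is sound.
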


To prove the lemma we start with characterizing the structure of $P$. In what follows, we refer to hyperedges containing exactly two set-nodes as {\em heavy} and to hyperedges containing exactly one set-node as {\em light}. The following straightforward claim follows directly from the construction.

\begin{claim} \label{c:edgeIntersection}
Let $e,e' \in E_H$ be two hyperedges, and let $i \in [m]$ and $j \in [|S_i|]$. 
\begin{enumerate}
	\item $e \cap e'$ contains at most one set-node.
	\item If $e$ is light and $u_i^j \in e$ then $x_i^j \in e$. If $j \ge 2$ then $x_i^{j-1} \in e$.
	\item If $e,e'$ are both light, and $u_i^j \in e \cap e'$, then $e \cap e'$ contains at least two element-nodes.
	\item If $e,e'$ are both heavy, and $u_i^j \in e \cap e'$, then $e \cap e' = \{u_i^j\}$.
	%\item If $e$ is not a base edge, and $u_i^j$ is its (unique) set node, then $x_i^j \in e$.
\end{enumerate}
\end{claim}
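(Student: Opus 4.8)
The plan is to prove all four parts by direct inspection, since each is a purely local statement about one or two hyperedges and $E_H$ is a union of a constant number of explicitly parameterized families (internal, transition, starting and ending, each in a light and a heavy variant). The first step I would take is organizational: record, for every family, (i) its set-nodes --- exactly one for a light edge and exactly two for a heavy edge --- and (ii) for each set-node $u_i^j$ that the edge contains, the contiguous block of $S_i$-element-nodes $x_i^{j'}$ accompanying it, together with the position of the index $j$ inside that block. With this table in hand each part becomes a short check.

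For part 1, if at least one of $e,e'$ is light the intersection contains at most one set-node trivially, so only two heavy edges require attention. Here I would show that the unordered pair of set-nodes uniquely identifies a heavy edge: the two set-nodes of an internal heavy edge lie in a common $S_i$ at distance $r-2$, those of a transition heavy edge lie in two distinct sets, and the starting (resp.\ ending) heavy edges are precisely those containing $u_{start}$ (resp.\ $u_{end}$). Hence no two distinct heavy edges carry the same pair, so they meet in at most one set-node.

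For part 2 the first assertion, $x_i^j\in e$, is read off every light family directly. The second, $x_i^{j-1}\in e$ for $j\ge2$, holds for the internal and transition light families because there $j$ lies strictly above the first index of a long $S_i$-block; the delicate point is the starting and ending families, whose single set-node sits at a block boundary. Part 3 then reduces to part 2: two distinct light edges sharing $u_i^j$ with $j\ge2$ both contain $x_i^j$ and $x_i^{j-1}$, so their intersection already holds two element-nodes, and the case $j=1$ is vacuous because the only light family that can carry set-node index $1$ is the starting family, which is unique per set. For part 4 I would aim to establish the dual structural fact that the two set-nodes of a heavy edge occupy the two ends of its node-sequence with the attached $S_i$-block lying on a definite side of each; if two heavy edges sharing $u_i^j$ can be shown to attach their blocks on opposite sides of the shared node, they share no element-node, and by part 1 no second set-node, so $e\cap e'=\{u_i^j\}$.

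I expect the main obstacle to be bookkeeping rather than conceptual difficulty, concentrated in exactly the non-generic families: the transition families mix elements of two sets, and the starting and ending families place a set-node at the boundary of its block. It is for these families that one must verify both the second clause of part 2 and the \emph{opposite sides} dichotomy underlying part 4, and these are precisely the places where a tight index range could fail. I would therefore treat the boundary and transition pairings last and most carefully, routing every index computation through the single table from the first step so that each edge case is easy to audit and the same off-by-one analysis is never repeated.
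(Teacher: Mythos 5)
The paper never actually proves this claim---it is asserted with the sentence ``the following straightforward claim follows directly from the construction''---so the real question is whether your plan can be executed. Your part 1 argument is fine (each heavy edge has exactly two set-nodes and the unordered pair determines the edge, so distinct heavy edges share at most one set-node). But the verifications you explicitly defer---the ending family for part 2 and the transition family for part 4---are precisely where the statement \emph{fails}, so the plan cannot be completed into a proof of the claim as written. Concretely, the ending light edge $e=\{u_i^{J}\}\cup\{x_i^\ell\}_{\ell=J}^{|S_i|}\cup\{x_{end}\}$ with $J=|S_i|-(r-3)$ has its unique set-node at the \emph{left} boundary of its element block, so $x_i^{J-1}\notin e$ even though $J\ge r+3\ge 2$ under Assumption 1; this refutes the second clause of part 2. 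It also destroys your reduction of part 3 to part 2, and part 3 itself is false: pair this $e$ with the internal light edge $e'=\{x_i^{\ell}\}_{\ell=J-r+2}^{J-1}\cup\{u_i^{J}\}\cup\{x_i^{J}\}$ (parameters $j_0=J-r+2$, $h=r-3$, legal since $|S_i|\ge 2r$); then $e\cap e'=\{u_i^{J},x_i^{J}\}$ contains only one element-node.

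Part 4 is worse: the ``opposite sides'' dichotomy you propose to establish is structurally wrong for transition edges, not merely delicate. Suppose $S_i$ is disjoint from two sets $S_{i'_1},S_{i'_2}$ with $i<i'_1<i'_2$, and fix $j\in[0,r-3]$. The two transition heavy edges $\{u_i^{|S_i|-j}\}\cup\{x_i^\ell\}_{\ell=|S_i|-j}^{|S_i|}\cup\{x_{i'_t}^\ell\}_{\ell=1}^{r-3-j}\cup\{u_{i'_t}^{r-2-j}\}$ for $t=1,2$ are distinct, both heavy, share the set-node $u_i^{|S_i|-j}$, and attach the \emph{same} $S_i$-block on the \emph{same} side of it; their intersection contains $x_i^{|S_i|}$ and is strictly larger than $\{u_i^{|S_i|-j}\}$ (this already happens for $r=3$). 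So two heavy edges sharing a set-node need not attach their blocks on opposite sides. In short, your organizational table, filled in honestly, would reveal that parts 2, 3 and 4 of the claim are false as literally stated; they can only hold after weakening (e.g., excluding ending edges from parts 2--3, and in part 4 restricting to pairs of edges that can coexist as windows of a single simple tight path, which is the only way the paper ever uses the claim). Through no fault of your strategy---which is sound and is exactly what would have caught the problem---the proposal cannot close these gaps.
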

\begin{lemma}
Let $1 \le \alpha < \beta \le k$ be such that $v_\alpha, v_\beta$ are consecutive set nodes in $P$. Then $\beta-\alpha = r-1$.
\end{lemma}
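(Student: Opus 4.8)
The plan is to prove the sharper global statement that the set-nodes of $P$ sit at positions $p_1<\cdots<p_M$ with \emph{every} consecutive gap equal to $r-1$; the asserted $\beta-\alpha=r-1$ is then the special case of one neighbouring pair. The proof combines a local lower bound on each gap (coming from Claim~\ref{c:edgeIntersection}) with a counting argument that pits the prescribed value of $k$ against the number of distinct element-nodes.

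\textbf{Step 1 (every gap is $\ge r-1$).} Suppose $v_\alpha,v_\beta$ are set-nodes with $\beta-\alpha\le r-2$. Then both lie inside a block of $r-1$ consecutive vertices of $P$. For an interior pair this block is exactly the overlap $e_{s-1}\cap e_s$ of two consecutive edges of the tight path, which by Claim~\ref{c:edgeIntersection}(1) contains at most one set-node, a contradiction. A short check shows this argument reaches every pair $(p_t,p_{t+1})$ except possibly one touching the very first or very last vertex of $P$, where only a single path-edge passes through the extreme set-node.

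\textbf{Step 2 (the count).} The vertices of $P$ are distinct and every element-node carries a label in $X\cup\{x_{start},x_{end}\}$, so the number $E$ of element-nodes satisfies $E\le n+2$; since $E=k-M$ this gives $M\ge k-(n+2)=\tfrac{n+2}{r-2}+1$. Writing $a=p_1-1\ge0$ and $b=k-p_M\ge0$ for the element-nodes outside $[p_1,p_M]$ and $D=\sum_t\bigl(g_t-(r-1)\bigr)$, one has the identity
\[
E=a+b+\sum_{t=1}^{M-1}(g_t-1)=a+b+(M-1)(r-2)+D ,
\]
and, granting Step 1 for all pairs, $D\ge0$. Since $M-1\ge\tfrac{n+2}{r-2}$ we get $(M-1)(r-2)\ge n+2$, hence $E\ge n+2$; combined with $E\le n+2$ this forces $E=n+2$, $a=b=0$, and $D=0$. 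As each summand of $D$ is nonnegative, every $g_t=r-1$, and in addition $p_1=1$, $p_M=k$, and $P$ uses each element exactly once. Substituting $k-1=(n+2)\tfrac{r-1}{r-2}$ checks the arithmetic.

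\textbf{Main obstacle.} The crux is that $g_t=r-1$ cannot be obtained locally: using two light internal edges of one set $S_i$ with set-nodes $u_i^{m}$ and $u_i^{m+1}$ over a common run of $r-1$ elements, one can splice a valid tight subpath whose two consecutive set-nodes are $r$ apart, all while every part of Claim~\ref{c:edgeIntersection} continues to hold. Hence excluding a gap of $r$ genuinely requires the exact length $k$ and the injectivity of element-nodes, precisely as in Step~2. The secondary difficulty is the two extreme pairs: when a set-node occupies position $1$ or $k$ it lies in a single path-edge and Claim~\ref{c:edgeIntersection}(1) says nothing, so to legitimise $D\ge0$ I would analyse the starting and ending edges directly — each of $u_{start},u_{end}$ occurs in a unique edge, forcing $P$ to begin with $u_{start},x_{start}$ and end with $x_{end},u_{end}$ — and verify that this pins the first and last gaps to $r-1$ as well.
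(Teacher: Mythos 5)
Your Step~2 arithmetic is sound: granting that \emph{every} consecutive gap is at least $r-1$, the identity $E=a+b+(M-1)(r-2)+D$ together with $E\le n+2$ does force $a=b=D=0$ (this is essentially how the paper later proves Lemma~\ref{l:setElementNodesPath}). The genuine gap is exactly the step you defer, namely the gaps touching positions $1$ and $k$, and your proposed repair does not work. First, its premise is false: $u_{start}$ lies in one heavy starting edge \emph{per set} $S_i$, not in a unique edge; and, more importantly, at this point of the argument nothing forces $P$ to contain $u_{start}$ or $x_{start}$ at all. The statement that $v_1=u_{start}$ and $v_2=x_{start}$ is Corollary~\ref{cor:startEnd}, which the paper derives \emph{from} the present lemma (via Lemma~\ref{l:setElementNodesPath} and a pigeonhole over all $n+2$ element-nodes), so invoking it here is circular. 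Second, the dangerous boundary configuration need not involve $u_{start}$ at all: take $v_1=u_i^j$ and let the first window $\{v_1,\dots,v_r\}$ be the heavy internal edge $\{u_i^j\}\cup\{x_i^\ell\}_{\ell=j}^{j+r-3}\cup\{u_i^{j+r-2}\}$, traversed with $u_i^{j+r-2}$ at some position $\beta\le r-1$ and the element-nodes in increasing order of index. Every window $\{v_s,\dots,v_{s+r-1}\}$ with $2\le s\le\beta$ can then be a light internal edge of $S_i$ whose run of indices slides by one at each step, and the window $\{v_{\beta+1},\dots,v_{\beta+r}\}$, containing no earlier set-node, must get a set-node at its last position. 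This fragment has first gap $\beta-1<r-1$ and an interior gap of exactly $r$: the deficit and the surplus cancel inside $D$, so your counting (and your Step~1, which only bounds interior gaps from below) is perfectly consistent with it. Excluding it requires a local argument at the boundary, which is precisely where the paper works hardest: it first shows the \emph{next} pair of set-nodes has gap $r-1$, so that $\{v_\beta,\dots,v_{\beta+r-1}\}$ is heavy, and then part (4) of Claim~\ref{c:edgeIntersection}, applied to the two heavy windows sharing $v_\beta$, forces $\beta=r$.

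Separately, your ``main obstacle'' paragraph misdiagnoses where the difficulty lies: interior gaps of size $r$ \emph{are} excluded locally in the paper, with no use of the exact value of $k$ or of injectivity. Your splice of two light internal edges over a common run only produces two mutually consistent windows; the paper's argument looks one window further. If $v_\alpha,v_\beta$ are consecutive set-nodes with $\beta-\alpha=r$ and there is room after $\beta$ (which $k\ge 3r$ guarantees up to symmetry), then $\{v_\ell\}_{\ell=\beta-1}^{\beta+r-2}$ and $\{v_\ell\}_{\ell=\beta}^{\beta+r-1}$ share $v_\beta$, so part (1) of Claim~\ref{c:edgeIntersection} makes the former light; it and the light window $\{v_\ell\}_{\ell=\alpha+1}^{\beta}$ then share the set-node $v_\beta$ but only one element-node, contradicting part (3), which holds for your splice since two light internal edges with set-node $u_i^j$ both contain $x_i^{j-1}$ and $x_i^j$. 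So the interior upper bound is not the crux; the boundary case is, and without it your proof does not go through.
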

\begin{proof}
We will start by proving $\beta - \alpha \le r-1$. Since $P$ is tight, $e= \{v_\ell\}_{\ell=\alpha+1}^{\alpha+r} \in E_H$, and since every edge contains at least one set-node, $\beta \le \alpha + r$ thus $\beta - \alpha \le r$. Assume towards contradiction that $\beta - \alpha = r$. Since $k \ge 3r$, either $k - \beta \ge r-1$ or $\alpha \ge r$. Without loss of generality, assume $k - \beta \ge r-1$ (the other case is analogous). Denote $e'= \{v_\ell\}_{\ell=\beta-1}^{\beta+r-2}, e''=\{v_\ell\}_{\ell=\beta}^{\beta+r-1} \in E_H$. By Claim~\ref{c:edgeIntersection}, since $e', e''$ share the set-node $v_{\beta}$, then $v_{\beta+1}, v_{\beta+2}, \ldots, v_{\beta+ r-2}$ are all element-nodes. Therefore $e,e'$ both contain only one set-node, namely $v_{\beta}$, which they also share, however their intersection contains only one element-node, in contradiction to Claim~\ref{c:edgeIntersection}. Therefore $\beta-\alpha \le r-1$.

Next we prove that $\beta-\alpha \ge r-1$. Assume first that $\alpha > 1$. Then $\{v_\ell\}_{\ell=\alpha-1}^{\alpha+r-2}$ and $\{v_\ell\}_{\ell=\alpha}^{\alpha+r-1}$ are both edges of $H$ that share $v_\alpha$, and by Claim~\ref{c:edgeIntersection} cannot share $v_{\beta}$. Therefore $\beta \ge \alpha + r-1$ thus $\beta - \alpha \ge r-1$. Otherwise, if $\alpha=1$, then $k-\beta \ge 2r$. Since $\{v_\ell\}_{\ell=\beta+1}^{\beta+r} \in E_H$, there exists $\beta+1 \le \gamma \le \beta+r$ such that $v_{\gamma}$ is a set-node. By the first part of the proof, $\gamma-\beta \le r-1$, and since $\beta>1$, then by the proof for the case $\alpha>1$ we get that $\gamma - \beta = r-1$. Therefore $\{v_\ell\}_{\ell=\alpha}^{r}, \{v_\ell\}_{\ell=\beta}^{\beta+ r - 1}$ are both heavy, and they share $v_{\beta}$. By Claim~\ref{c:edgeIntersection}, they do not have any element-node in common, and therefore $r \le \beta$ thus $r-1 \le \beta-1 = \beta - \alpha$. We conclude that $\beta - \alpha = r$.
\end{proof}
We can now give a complete characterization of $P$ in terms of set-nodes and element-nodes.

\begin{lemma}\label{l:setElementNodesPath}
For every $\alpha \in [k]$, $v_\alpha$ is a set node if and only if $r-1$ divides $\alpha-1$.
\end{lemma}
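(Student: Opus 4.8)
## Proof Plan for Lemma~\ref{l:setElementNodesPath}

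The plan is to leverage the previous lemma, which establishes that consecutive set-nodes in $P$ are always exactly $r-1$ positions apart. This spacing result essentially does all the heavy lifting: it tells us that set-nodes occur periodically with period $r-1$. What remains is to pin down the \emph{phase} of this periodic pattern, i.e. to prove that the set-nodes sit precisely at positions $\alpha$ with $r-1 \mid \alpha-1$, which means the very first set-node must be at position $\alpha=1$.

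First I would establish the phase by examining the start of the path. The key observation is that $v_1$ must be a set-node: by construction every starting hyperedge begins with either $u_{start}$ (a set-node) followed by $x_{start}$, or begins with $x_{start}$ which is itself an element-node but is immediately handled by showing no valid edge structure places an element-node at the very start in a way consistent with tightness. More carefully, I would argue that the endpoint $v_1$ of the path lies in a starting edge $\{v_\ell\}_{\ell=1}^{r}$, and by inspecting which edges can serve as the initial edge (only starting hyperedges contain $x_{start}$ or $u_{start}$), one sees the set-node $u_i^{r-2}$ or the pair $u_{start}$ sits at a forced position. The cleanest route is to show directly that $v_1$ is a set-node using the structure of starting edges together with Claim~\ref{c:edgeIntersection}; alternatively, if $v_1$ were an element-node, I would derive a contradiction with the spacing lemma applied near the boundary. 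Once $v_1$ is known to be a set-node, the phase is fixed.

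Next I would run the induction. Having established that $v_1$ is a set-node (base case: $\alpha=1$ satisfies $r-1 \mid \alpha-1=0$), the previous lemma guarantees that if $v_\alpha$ is a set-node then the next set-node is exactly at position $\alpha + (r-1)$. This immediately gives, by induction on the index of successive set-nodes, that the set-nodes are exactly at positions $1, 1+(r-1), 1+2(r-1), \ldots$, which is precisely the set of $\alpha$ with $r-1 \mid \alpha-1$. For the converse direction (that every such position is a set-node and no other position is), I would note that between two consecutive set-nodes all intermediate vertices are element-nodes — this follows because every hyperedge contains at least one set-node, so a gap of $r-1$ with set-nodes only at the endpoints forces the interior to be element-nodes, and any extra set-node in the interior would violate the spacing lemma. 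I should also verify the count is consistent: with $k=(n+2)(1+1/(r-2))+1$ and period $r-1$, the number of set-nodes is $(k-1)/(r-1)+1$, and I would sanity-check that this matches $k \equiv 1 \pmod{r-1}$ so that the last vertex $v_k$ is also a set-node, consistent with the ending edges terminating in $u_{end}$.

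The main obstacle I anticipate is the base case rather than the induction: carefully justifying that $v_1$ must be a set-node requires a somewhat delicate case analysis of the starting hyperedges, since one of the two starting-edge types begins with the element-node $x_{start}$ directly. I would handle this by arguing that whichever starting edge is used, the combinatorial constraints force the set-node to appear at position $1$ (in the $u_{start}$ case this is immediate, and in the $x_{start}$ case the endpoint of the path must still be analyzed so that the periodic pattern aligns correctly), and by using the fact that $v_k$ lies in an ending edge to cross-check the two endpoints against the parity forced by $k \equiv 1 \pmod{r-1}$. Everything after fixing the phase is a routine application of the spacing lemma.
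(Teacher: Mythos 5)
Your periodicity-plus-induction skeleton is exactly the paper's, but the way you fix the phase has a genuine gap, and it sits precisely where you anticipated trouble. You assume that $v_1$ lies in a starting hyperedge (and, later, that $v_k$ lies in an ending hyperedge). Nothing justifies this at this stage: the first edge $\{v_\ell\}_{\ell=1}^{r}$ of $P$ can a priori be \emph{any} hyperedge of $H$, for instance a light internal edge $\{x_i^\ell\}_{\ell=j}^{j+h}\cup\{u_i^{j+h+1}\}\cup\{x_i^\ell\}_{\ell=j+h+1}^{j+r-2}$, whose unique set-node may occupy any position $2,\ldots,r-1$ within the edge. Tight paths beginning with up to $r-2$ element-nodes genuinely exist in $H$ (they simply never have length $k$), so no local argument near the boundary --- via Claim~\ref{c:edgeIntersection} or the spacing lemma --- can rule them out. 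Note also that the statement $v_1=u_{start}$, $v_2=x_{start}$ is exactly Corollary~\ref{cor:startEnd} of the paper, which is \emph{deduced from} Lemma~\ref{l:setElementNodesPath}; invoking it here would be circular. Your fallback, cross-checking the two endpoints against $k\equiv 1\ (\mathrm{mod}\ r-1)$, also does not suffice: writing $\alpha_1,\alpha_t$ for the first and last set-node positions, the spacing lemma gives $\alpha_t-\alpha_1\equiv 0\ (\mathrm{mod}\ r-1)$, hence $(\alpha_1-1)+(k-\alpha_t)\equiv 0\ (\mathrm{mod}\ r-1)$; but since each summand is at most $r-1$ (the first and last edges each contain a set-node), the sum could equal $0$, $r-1$, or $2(r-1)$. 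For example $\alpha_1=2$, $\alpha_t=k-r+2$ is consistent with every local and modular constraint.

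The missing ingredient is a \emph{global counting} argument, which is how the paper pins down the phase. Since $P$ is simple and $H$ has exactly $n+2$ element-nodes, $P$ contains at most $n+2$ element-nodes; by the spacing lemma it contains exactly $(t-1)(r-2)+(\alpha_1-1)+(k-\alpha_t)$ of them, where $t$ is the number of set-nodes. On the other hand, $k-1=(\alpha_1-1)+(t-1)(r-1)+(k-\alpha_t)$ together with the choice $k-1=\frac{n+2}{r-2}(r-1)$ yields $n+2=(t-1)(r-2)+\frac{r-2}{r-1}\left[(\alpha_1-1)+(k-\alpha_t)\right]$. Comparing the two expressions forces $(\alpha_1-1)+(k-\alpha_t)\le\frac{r-2}{r-1}\left[(\alpha_1-1)+(k-\alpha_t)\right]$, hence $(\alpha_1-1)+(k-\alpha_t)=0$, i.e.\ $\alpha_1=1$ and $\alpha_t=k$. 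Only after this does your induction go through. So the repair is concrete: replace the case analysis of starting edges by this count of element-nodes against the prescribed value of $k$.
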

\begin{proof}
Let $\alpha_1, \alpha_2,\ldots, \alpha_t$ be such that $v_{\alpha_1}, v_{\alpha_2}, \ldots, v_{\alpha_t}$ are the only set-nodes in $P$ in consecutive order. From the previous lemma it follows that between every two set-nodes there are exactly $r-2$ element-nodes. It follows in addition by induction that $\alpha_t-\alpha_1 = (t-1)(r-1)$. Therefore the number of element-nodes between $v_{\alpha_1}$ and $v_{\alpha_t}$ is $(t-1)(r-2)$. In addition, the first $\alpha_t-1$ nodes, as well as the last $k - \alpha_t$ are element-nodes. Therefore 
\begin{equation}
(t-1)(r-2) + (\alpha_1-1)+(k - \alpha_t) \le n+2
\label{eq:elementSmallerTotal}
\end{equation}
By the definition of $k$ we have that $$(k-\alpha_t)+(\alpha_t-j_1)+(\alpha_1-1) = k - 1 = \frac{n+2}{r-2}(r-1) \;,$$ and therefore 
\begin{equation}
n+2 = (r-2)\frac{k-1}{r-1} = (t-1)(r-2) + (r-2)\frac{(k-\alpha_t) + (\alpha_1-1)}{r-1}\;.
\label{eq:totalEqual}
\end{equation}
Combining \eqref{eq:elementSmallerTotal} and \eqref{eq:totalEqual} we get that $(\alpha_1-1)+(k - \alpha_t) \le \frac{r-2}{r-1}[(k-\alpha_t) + (\alpha_1-1)]$. It follows that $(\alpha_1-1)+(k - \alpha_t) = 0$, and therefore $\alpha_1=1$ and $\alpha_t=k$. The lemma now follows by simple induction.
\end{proof}

\begin{corollary} \label{cor:startEnd}
Every element-node appears exactly once in $P$. Moreover, $v_1=u_{start}$, $v_k=u_{end}$, $v_2 = x_{start}$ and $v_{k-1}=x_{end}$.
\end{corollary}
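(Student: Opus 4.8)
The plan is to combine the positional characterization of Lemma~\ref{l:setElementNodesPath} with a counting argument for the first assertion, and then to pin down the four boundary nodes using that $x_{start}$ and $x_{end}$ each occur exactly once and lie only inside starting, resp.\ ending, edges.

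First I would count. By Lemma~\ref{l:setElementNodesPath} the set-nodes of $P$ are exactly those at positions congruent to $1 \pmod{r-1}$; writing $v_{\alpha_1},\dots,v_{\alpha_t}$ for them, the previous lemma already gives $\alpha_1 = 1$, $\alpha_t = k$ and $\alpha_t-\alpha_1 = (t-1)(r-1)$, so $t-1 = (k-1)/(r-1) = (n+2)/(r-2)$. Hence the number of element-node positions in $P$ is $k-t = (n+2)\frac{r-1}{r-2} - \frac{n+2}{r-2} = n+2$. Since $P$ is a tight path its $k$ nodes are pairwise distinct, so these $n+2$ positions carry $n+2$ distinct element-nodes. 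As the only element-nodes of $H$ are $x_1,\dots,x_n,x_{start},x_{end}$, exactly $n+2$ of them, every element-node appears exactly once in $P$, which proves the first sentence.

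For the boundary I would show that $u_{start}$ must occupy a path endpoint. As just established, $x_{start}$ appears exactly once, say at an element-position $\beta$, and since $\alpha_1=1$ and $\alpha_t=k$ are the extreme set-positions, $\beta$ lies strictly between two consecutive set-positions $\alpha_s < \beta < \alpha_{s+1}$. The heavy path-edge $\{v_{\alpha_s},\dots,v_{\alpha_{s+1}}\}$ then contains $x_{start}$; inspecting the construction, the only heavy edges containing $x_{start}$ are the starting heavy edges, each of which also contains $u_{start}$. Thus $u_{start}$ occurs in $P$, necessarily at $\alpha_s$ or $\alpha_{s+1}$. If $u_{start}$ sat at an interior set-position $\alpha_q$ with $1<q<t$, then the two heavy blocks flanking $\alpha_q$ would both be starting heavy edges (every heavy edge through $u_{start}$ is of this type) and hence would both contain $x_{start}$; as these blocks meet only in the set-position $\alpha_q$, this would place the element-node $x_{start}$ at two distinct positions, contradicting uniqueness. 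Therefore $u_{start}\in\{v_1,v_k\}$, and symmetrically (via $x_{end}$ and the ending edges) $u_{end}\in\{v_1,v_k\}$; since $u_{start}\neq u_{end}$ we get $\{v_1,v_k\}=\{u_{start},u_{end}\}$, and we fix the orientation so that $v_1=u_{start}$ and $v_k=u_{end}$.

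It remains to identify $v_2$ and $v_{k-1}$. With $v_1=u_{start}$, the first block $e_1=\{v_1,\dots,v_r\}$ is a starting heavy edge $\{u_{start},x_{start},x_i^1,\dots,x_i^{r-3},u_i^{r-2}\}$, so $v_r=u_i^{r-2}$ and $x_{start}$ sits somewhere among $v_2,\dots,v_{r-1}$. To force it to position $2$ I would use that, within a single set $S_i$, the internal edges only ever link consecutively-indexed element-nodes $x_i^\ell$, so the $S_i$-element-nodes must appear along $P$ in monotone index order. Tracking this order across $e_1$ and the following light edge $e_2 = \{x_{start},x_i^1,\dots,x_i^{r-2},u_i^{r-2}\}$ (which is a starting light edge since it contains $x_{start}$ and shares the set-node $u_i^{r-2}=v_r$, forcing $v_{r+1}=x_i^{r-2}$) pins $x_i^1,\dots,x_i^{r-3}$ to positions $3,\dots,r-1$ and hence $v_2=x_{start}$; the symmetric argument gives $v_{k-1}=x_{end}$. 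I expect this last step, deducing the exact within-block ordering from the overlap structure of Claim~\ref{c:edgeIntersection} together with the internal edges, to be the main obstacle, since the preceding steps only locate $x_{start}$ inside the first block rather than at its first element-slot.
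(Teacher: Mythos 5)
Your first two steps are correct, and the second one takes a genuinely different route from the paper. The paper pins down the element-nodes first: it argues that every element-node at a position $3\le\alpha\le k-2$ lies on at least two light path-edges sharing a common set-node, whereas $x_{start}$ and $x_{end}$ lie (in all of $H$) on exactly one light edge per set-node, so $\{x_{start},x_{end}\}=\{v_2,v_{k-1}\}$; only then does it read off $v_1=u_{start}$ and $v_k=u_{end}$ from the heavy edges $\{v_1,\dots,v_r\}$ and $\{v_{k-r+1},\dots,v_k\}$. You argue in the opposite order, and your argument for $\{v_1,v_k\}=\{u_{start},u_{end}\}$ is sound: every edge of $H$ through $u_{start}$ is a heavy starting edge and hence contains $x_{start}$, so an interior occurrence of $u_{start}$ would place $x_{start}$ in both flanking heavy blocks, which by simplicity of $P$ intersect only in $u_{start}$ itself.

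The genuine gap is your last step, as you flag yourself. The ``monotone index order'' property you want to invoke is essentially Lemma~\ref{l:elementNodeFollowingSetNode} and Proposition~\ref{p:elementNodesOrder}, which the paper proves \emph{after} this corollary by an induction along the path; with your $v_1=u_{start}$ already in hand this would not be circular (the induction's base case only needs $\{v_1,\dots,v_r\}$ to be a starting edge), but it is a substantial proof you have not supplied, not an observation about internal edges. Worse, even if granted, plain monotonicity is too weak for your purpose: the assignment $v_2=x_i^1$, $v_3=x_{start}$, $v_4=x_i^2,\dots$ keeps the $S_i$-indices monotone along $P$ and is consistent with everything you established ($e_1$ and $e_2$ still equal, as node sets, the heavy and light starting edges for $S_i$), yet it has $x_{start}\ne v_2$. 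What rules it out is the \emph{next} window: for $r\ge 4$, $e_3=\{v_3,\dots,v_{r+2}\}$ is a light edge whose unique set-node is $u_i^{r-2}=v_r$; the only light edge of $H$ containing both $x_{start}$ and $u_i^{r-2}$ is the light starting edge, which you identified with $e_2$, and $e_3\ne e_2$ as node sets because $v_2\in e_2\setminus e_3$. Hence $x_{start}\notin e_3\supseteq\{v_3,\dots,v_{r-1}\}$, which together with $x_{start}\in\{v_2,\dots,v_{r-1}\}$ forces $v_2=x_{start}$ (for $r=3$ the step is vacuous, since $v_2$ is the only element position in the first block); symmetrically $v_{k-1}=x_{end}$. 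With this replacement, or with the paper's light-edge counting argument, your proof closes; as written it is incomplete.
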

\begin{proof}
From Lemma~\ref{l:setElementNodesPath} the number of element-nodes in $P$ is $(r-2) \cdot (k-1)/(r-1) = n+2$. Since $P$ is simple, every element-node appears exactly once in $P$, and specifically $x_{start}$ and $x_{end}$. Therefore $P$ must contain at least one starting edge and one ending edge. For every $3 \le \alpha \le k-2$, if $v_\alpha$ is an element-node, then there is a set node $u$ such that there are at least $2$ edges on $P$ containing both $v_\alpha$ and $u$ and no other set-node. Since for every $v \in \{x_{start},x_{end}\}$ and for every set node $u$, there is exactly one edge in $H$ that contains both $v$ and $u$ and no other set-node, it follows that $\{x_{start}, x_{end}\} = \{v_2,v_{k-1}\}$. Without loss of generality, $x_{start} = v_2$ and $x_{end} = v_{k-1}$. Since every base hyperedge containing $x_{start}$ (resp. $x_{end}$) must contain $u_{start}$ (resp. $u_{end}$), it follows that $v_1 = u_{start}$ and $v_k = u_{end}$.
\end{proof}

%%%%%%%%%%%%%%%
In what follows, we show that all element nodes of a given set appear consecutively along the path, and moreover, every two sets whose nodes are visited by the path are, in fact,  disjoint.

\begin{lemma}\label{l:elementNodeFollowingSetNode}
Let $i \in {\cal I}_P, j \in [m]$ and let $\alpha \in [k]$ be such that $v_\alpha=u_i^j$. Then $v_{\alpha+1} = x_i^j$ and if $j \ge 2$ then $v_{\alpha-1}=x_i^{j-1}$.
\end{lemma}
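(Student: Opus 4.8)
The plan is to analyze the local structure of the path $P$ around a set-node $v_\alpha = u_i^j$, using the two hyperedges that straddle $v_\alpha$. Since $P$ is a tight path of length $k \ge 3r$ and by Lemma~\ref{l:setElementNodesPath} set-nodes occur exactly at positions $\equiv 1 \pmod{r-1}$, we know $v_\alpha$ is flanked on both sides by runs of $r-2$ element-nodes between it and the neighboring set-nodes (except at the very endpoints, which Corollary~\ref{cor:startEnd} pins down as $u_{start},u_{end}$, so we may assume $v_\alpha$ is an interior set-node). The two hyperedges of $P$ that contain $v_\alpha$ together with only one set-node are the \emph{light} edges straddling it; by the consecutive-set-node spacing, each such edge is a window of $r$ consecutive nodes with exactly one set-node, namely $v_\alpha$. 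My first step is to record these two windows explicitly, say $e_L = \{v_\ell\}_{\ell=\alpha-1}^{\alpha+r-2}$ and $e_R = \{v_\ell\}_{\ell=\alpha-r+2}^{\alpha+1}$ (adjusting indices near the ends), and note that they share $v_\alpha$.

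The key step is to invoke Claim~\ref{c:edgeIntersection}(2): in any light edge containing $u_i^j$, the element-node $x_i^j$ also appears, and if $j \ge 2$ then $x_i^{j-1}$ appears as well. So both $x_i^j$ and (when $j\ge 2$) $x_i^{j-1}$ lie in each of $e_L, e_R$. The task is then to show these two element-nodes occupy precisely positions $\alpha+1$ and $\alpha-1$ along $P$. I would argue this by ruling out all other positions: within a single light window around $v_\alpha$, the only positions adjacent to the set-node are $\alpha-1$ and $\alpha+1$, and the internal/starting/ending edge definitions force the set-node $u_i^j$ to sit immediately between $x_i^{j-1}$ and $x_i^j$ in the sequence (look at the second family of internal edges, where $u_i^{j+h+1}$ is sandwiched between $x_i^{j+h}$ and $x_i^{j+h+1}$). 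Since $P$ is simple and each of $x_i^j, x_i^{j-1}$ appears exactly once (Corollary~\ref{cor:startEnd}), they cannot appear twice in the two overlapping windows, which forces them into the shared adjacent slots $\alpha\pm 1$. Matching the labels then gives $v_{\alpha+1}=x_i^j$ and $v_{\alpha-1}=x_i^{j-1}$.

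The main obstacle I anticipate is \textbf{orientation}: Claim~\ref{c:edgeIntersection}(2) tells us $x_i^j$ and $x_i^{j-1}$ are adjacent neighbors of $u_i^j$, but it does not immediately say \emph{which} of them is $v_{\alpha+1}$ versus $v_{\alpha-1}$, since the edges are undirected. I expect the tie-break to come from consistency across consecutive set-nodes: the superscript index $j$ must increase monotonically along the run of element-nodes belonging to $S_i$, because each internal edge advances the element-index by exactly one while keeping the set fixed. Tracking this monotonicity — essentially showing the path traverses $S_i$ in the order $x_i^1, x_i^2, \ldots$ — fixes the orientation so that $x_i^{j-1}$ precedes $u_i^j$ and $x_i^j$ follows it. Handling the boundary cases $j=1$ (where no $x_i^{j-1}$ exists and the claim makes no assertion about $v_{\alpha-1}$) and the case where $v_\alpha$ is adjacent to a transition edge (so the other neighbor belongs to a different set) will require a short separate check using the transition-edge and starting/ending-edge definitions, but these should follow the same window-intersection reasoning.
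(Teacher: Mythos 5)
Your local window analysis is sound and matches the first half of the paper's own argument: the two light edges $\{v_\ell\}_{\ell=\alpha-r+2}^{\alpha+1}$ and $\{v_\ell\}_{\ell=\alpha-1}^{\alpha+r-2}$ both contain $u_i^j$, so Claim~\ref{c:edgeIntersection}(2) together with simplicity of $P$ forces $x_i^j$ (and $x_i^{j-1}$ when $j\ge 2$) into the slots $\{v_{\alpha-1},v_{\alpha+1}\}$. The genuine gap is exactly at the point you flag as the ``main obstacle'': your tie-break for the orientation is circular. You propose to decide which of $x_i^{j-1},x_i^j$ precedes $u_i^j$ by ``showing the path traverses $S_i$ in the order $x_i^1,x_i^2,\ldots$,'' but in the paper this monotone-traversal statement (Proposition~\ref{p:elementNodesOrder}, Corollary~\ref{cor:elementNodesConsecutive}, and the ordering claim inside the proof of Lemma~\ref{l:exactCoverFromPath}) is a \emph{consequence} of Lemma~\ref{l:elementNodeFollowingSetNode}, not something available before it. Moreover, your one-line justification --- ``each internal edge advances the element-index by exactly one while keeping the set fixed'' --- cannot decide the direction, because hyperedges are unordered sets: a traversal visiting the elements of $S_i$ in \emph{decreasing} order also makes every window of $r$ consecutive path nodes a legitimate internal edge. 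As written, the proposal establishes only the unordered identity $\{x_i^j,x_i^{j-1}\}=\{v_{\alpha-1},v_{\alpha+1}\}$, not the lemma.

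The missing idea is an induction on the position $\alpha$ along the path, anchored at the start, which is how the paper breaks the symmetry. Corollary~\ref{cor:startEnd} fixes the orientation of $P$ ($v_1=u_{start}$, $v_2=x_{start}$), so for the first set-node, at $\alpha=r$, the edge $\{v_\ell\}_{\ell=1}^{r}$ must be a \emph{heavy} starting edge; that edge contains $x_i^{r-3}$ but, crucially, does \emph{not} contain $x_i^{r-2}$, which forces $v_{\alpha+1}=x_i^{r-2}$ and $v_{\alpha-1}=x_i^{r-3}$. In the inductive step one considers the previous set-node $v_\beta$ with $\beta=\alpha-(r-1)$, whose neighbors are already identified by the induction hypothesis; this pins down which internal or transition edge the heavy edge $\{v_\ell\}_{\ell=\beta}^{\alpha}$ is, and in every case that edge contains $x_i^{j-1}$ (when $j\ge2$) but not $x_i^j$, so $x_i^j$ must occupy the slot outside the edge, namely $v_{\alpha+1}$ (the case $j=1$ needs a short separate argument via the adjacent light transition edge). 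In other words, the orientation is propagated forward from the starting edge rather than pulled from a global monotonicity fact; any rigorous version of your monotonicity claim would have to be proved by essentially this induction, so you should restructure the proof around it.
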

\begin{proof} 
First note that since $v_\alpha=u_i^j$ then $r \le \alpha \le k-r+1$. 
We prove the claim by induction on $\alpha$.
If $\alpha=r$ then the edge $e=\{v_\ell\}_{\ell=1}^r$ is a starting edge, and therefore $j=r-2$. By Claim~\ref{c:edgeIntersection} every light edge that contains $u_i^{r-2}$ must also contain $x_i^{r-2}$ and $x_i^{r-3}$.
The only two nodes contained in all light edges in $P$ that contain $v_\alpha$ are $v_{\alpha-1}$ and $v_{\alpha+1}$, it follows that $\{x_i^j,x_i^{j-1}\}=\{v_{\alpha-1},v_{\alpha+1}\}$. Since $e$ is a heavy starting edge, $x_i^{r-2} \notin e$. Therefore $v_{\alpha-1}=x_i^{r-3}$ and $v_{\alpha+1}=x_i^{r-2}$.

Assume therefore that the claim holds for all $r \le \alpha' < \alpha$, and let $\beta = \alpha - (r-1) \ge r$. Then $\beta$ is also a set node, and by the induction hypothesis, the claim holds for $v_\beta$. Let $i' \in [m]$ and $j' \in [|S_{i'}|]$ be such that $v_\beta = u_{i'}^{j'}$.
If $i = i'$ then $e = \{v_\ell\}_{\ell=\beta}^\alpha$ is a heavy internal edge containing $u_i^{j'}, u_i^j$. By the induction hypothesis, $x_i^{j'+1} = v_{\beta+1} \in e$. Therefore by the construction of internal edges, $j = j' + r -2 \ge 2$, and $x_i^{j-1} \in e$ and $x_i^{j} \notin e$. By arguments similar to the first part of the proof we get that $\{x_i^j,x_i^{j-1}\}=\{v_{\alpha-1},v_{\alpha+1}\}$, and therefore $v_{\alpha-1}=x_i^{j-1}$ and $v_{\alpha+1}=x_i^j$.

Otherwise, if $i \ne i'$, then $e = \{v_\ell\}_{\ell=\beta}^\alpha$ is a heavy transition edge containing $u_{i'}^{j'}, u_i^j$. Again, by the induction hypothesis, $v_{\beta+1} = x_{i'}^{j'+1}$. By the construction of transition hyperedges we get that $i' < i$ and $S_{i'} \cap S_i = \emptyset$. If $j \ge 2$ then in addition $x_i^{j-1} \in e$, and by arguments similar to the first part of the proof $v_{\alpha-1}=x_i^{j-1}$ and $v_{\alpha+1}=x_i^j$. If $j=1$ then $e'=\{v_\ell\}_{\ell=\beta+1}^{\alpha+1}$ is a transition light edge such that $x_i^1 \in e'$. Since $x_i^1 \notin e$ we get that $v_{\alpha+1}=x_i^1$. 
\end{proof}
The following follows from the previous lemma by induction.
\begin{proposition} \label{p:elementNodesOrder}
Let $i \in {\cal I}_P, j \in [|S_i|]$ and let $\alpha \in [k]$ be such that $u_i^j = v_\alpha$. Then 
\begin{enumerate}
	\item For every $1 \le \ell \le \min\{r-2, j-1\}$, $v_{\alpha-\ell} = x_i^{j-\ell}$ ; and
	\item For every $1 \le \ell \le \min\{r-2, |S_i|-j+1\}$, $v_{\alpha+\ell} = x_i^{j+\ell-1}$.
\end{enumerate}
Moreover, if $j-1 \ge r-2$ then $v_{\alpha-r+1}=u_i^{j-r+1}$ and if $|S_i|-j+1 \ge r-2$ then $v_{\alpha+r-1}=u_i^{j+r-2}$.
\end{proposition}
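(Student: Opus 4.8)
The plan is to prove parts~1 and~2 together by induction on $\ell$, with the base case $\ell=1$ being exactly Lemma~\ref{l:elementNodeFollowingSetNode}, which already gives $v_{\alpha+1}=x_i^j$ and, when $j\ge 2$, $v_{\alpha-1}=x_i^{j-1}$. Throughout I would lean on Lemma~\ref{l:setElementNodesPath}: the set-nodes of $P$ occupy exactly the positions congruent to $1$ modulo $r-1$, so $u_i^j=v_\alpha$ is separated from its neighbouring set-nodes $v_{\alpha-(r-1)}$ and $v_{\alpha+(r-1)}$ by blocks of $r-2$ element-nodes, and any length-$r$ window of $P$ that does not reach either neighbouring set-node contains a single set-node.

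For the forward step, assume $v_{\alpha+\ell'}=x_i^{j+\ell'-1}$ for all $\ell'<\ell$, where $2\le \ell\le\min\{r-2,|S_i|-j+1\}$. The engine is the length-$r$ window $W=\{v_{\alpha+\ell-r+1},\ldots,v_{\alpha+\ell}\}$, which is a hyperedge of $H$ because $P$ is tight. Since $\ell\le r-2$, this window starts strictly after $v_{\alpha-(r-1)}$ and ends strictly before $v_{\alpha+(r-1)}$, so by Lemma~\ref{l:setElementNodesPath} its only set-node is $u_i^j$ and $W$ is light. Inspecting the light internal and light transition hyperedges incident to $u_i^j$ in the construction, in every case the element-nodes that follow $u_i^j$ form an initial segment $x_i^j,x_i^{j+1},\ldots$ of the remaining elements of $S_i$ (a transition edge crosses into another set only after $x_i^{|S_i|}$, i.e.\ not before offset $|S_i|-j+1$). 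Together with simplicity of $P$ this forces $\{v_{\alpha+1},\ldots,v_{\alpha+\ell}\}=\{x_i^j,\ldots,x_i^{j+\ell-1}\}$; the induction hypothesis pins the first $\ell-1$ of these as $x_i^j,\ldots,x_i^{j+\ell-2}$, so $v_{\alpha+\ell}=x_i^{j+\ell-1}$ follows by elimination. Part~1 is proved symmetrically from the window $\{v_{\alpha-\ell},\ldots,v_{\alpha-\ell+r-1}\}$ and the predecessor clause of Lemma~\ref{l:elementNodeFollowingSetNode}.

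For the final clause I would identify the neighbouring set-nodes from the unique heavy hyperedge spanning two consecutive set-nodes, namely $e=\{v_\alpha,\ldots,v_{\alpha+r-1}\}$ (resp.\ $\{v_{\alpha-r+1},\ldots,v_\alpha\}$). When $j\le |S_i|-r+2$, so that $u_i^{j+r-2}$ exists, the index ranges in the construction rule out a heavy transition edge beginning at $u_i^j$, so $e$ must be the heavy internal edge $\{u_i^j\}\cup\{x_i^\ell\}_{\ell=j}^{j+r-3}\cup\{u_i^{j+r-2}\}$, whence $v_{\alpha+r-1}=u_i^{j+r-2}$; Claim~\ref{c:edgeIntersection} supplies the disjointness facts needed to exclude the remaining edge types, and the first and last blocks, where the windows meet $x_{start},x_{end},u_{start},u_{end}$, are treated using Corollary~\ref{cor:startEnd} and the starting/ending hyperedges.

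The step I expect to be the real obstacle is pinning the order inside a block. The tight-path condition constrains each length-$r$ window only as a set, and the windows that contain an entire block are invariant under permuting that block's interior element-nodes, so by themselves they cannot fix the order. The leverage must come precisely from the offset windows $W$ above, which see $u_i^j$ together with only a proper prefix of the block's element-nodes: there the single-set (consecutive) structure of the edges, combined with simplicity of $P$ and the anchoring $v_{\alpha+1}=x_i^j$, $v_{\alpha-1}=x_i^{j-1}$ from Lemma~\ref{l:elementNodeFollowingSetNode}, leaves a unique consistent assignment. Making this localization precise, and simultaneously handling the boundary regime $|S_i|-j+1=r-2$ — where the block ends inside a transition or ending edge, so the node at position $\alpha+r-1$ is a set-node of a different set rather than the nonexistent $u_i^{j+r-2}$ — is the delicate part of the argument.
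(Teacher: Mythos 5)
Your overall plan (induct on $\ell$ from the base case of Lemma~\ref{l:elementNodeFollowingSetNode}, using the light length-$r$ windows around $u_i^j$ and Lemma~\ref{l:setElementNodesPath} to classify them) is exactly what the paper intends, since the paper offers no details beyond ``follows from the previous lemma by induction.'' However, your central deduction --- ``together with simplicity of $P$ this forces $\{v_{\alpha+1},\ldots,v_{\alpha+\ell}\}=\{x_i^j,\ldots,x_i^{j+\ell-1}\}$'' --- does not follow from the ingredients you cite, and since you yourself defer ``making this localization precise,'' this is a genuine gap at the heart of the proof rather than an omitted routine detail. A single offset window constrains its $r$ nodes only \emph{as a set}, and that set constraint admits wrong assignments. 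Concretely, take $r=6$, $\ell=2$, and $j$ in the middle of a large set $S_i$: put $v_{\alpha+2}=x_i^{j-4}$ and $\{v_{\alpha-3},v_{\alpha-2}\}=\{x_i^{j-3},x_i^{j-2}\}$. Then $W=\{v_{\alpha-3},\ldots,v_{\alpha+2}\}$ equals $\{x_i^{j-4},x_i^{j-3},x_i^{j-2},x_i^{j-1}\}\cup\{u_i^j\}\cup\{x_i^j\}$, which \emph{is} a legitimate internal light hyperedge of the construction (the case $h=r-3$), and this assignment is consistent with the induction hypothesis $v_{\alpha+1}=x_i^j$, $v_{\alpha-1}=x_i^{j-1}$ and with simplicity of $P$, yet violates your conclusion $v_{\alpha+2}=x_i^{j+1}$. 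Your parenthetical justification (``the element-nodes that follow $u_i^j$ form an initial segment $x_i^j,x_i^{j+1},\ldots$'') silently treats the hyperedge as ordered; as a set statement it is respected by the bad assignment above, because the stray successor $x_i^{j-4}$ is simply not an element of index $\ge j$. Such assignments are only excluded by \emph{other} windows, so no argument that inspects one window together with the induction hypothesis can close the step.

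The missing idea is to compare \emph{consecutive} windows and to strengthen the induction hypothesis so that it records the entire window, not just the nodes adjacent to $u_i^j$. For $0\le t\le r-1$ let $W_t=\{v_{\alpha-(r-1)+t},\ldots,v_{\alpha+t}\}$. The heavy edge $W_0$ anchors the induction: its element set is determined by the construction, and Lemma~\ref{l:elementNodeFollowingSetNode} (applied at both of its set-nodes) pins its orientation, so the elements of $W_0$ all have $S_i$-index $<j$ (or lie in a preceding disjoint set, or are $x_{start}$). Now suppose inductively that the element set of $W_{t-1}$ is the contiguous run at offsets $[j-r+t,\,j+t-2]$ (contiguous in the lexicographic order of the construction, allowing continuation past $x_i^1$ or $x_i^{|S_i|}$ into a disjoint set, $x_{start}$, or $x_{end}$). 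Both $W_{t-1}$ and $W_t$ ($1\le t-1,t\le r-2$) are light edges containing $u_i^j$, hence each is $\{u_i^j\}$ plus such a contiguous run of $r-1$ element-nodes containing $x_i^{j-1},x_i^j$ (Claim~\ref{c:edgeIntersection}); by simplicity they differ in exactly one node each way, and two equal-length contiguous runs differing in one element are unit shifts of one another. The downward shift is impossible because it would expel the top element $v_{\alpha+t-1}$ of $W_{t-1}$'s run, which still lies in $W_t$; hence the run shifts upward, and the unique new element $x_i^{j+t-1}$ must be the unique new path node $v_{\alpha+t}$, while the expelled element identifies $v_{\alpha-(r-t)}$, giving parts 2 and 1 simultaneously (the ``moreover'' clauses come from the heavy windows $W_0,W_{r-1}$). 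With this repair your plan goes through, including the boundary regime you correctly flag, where the runs continue into a neighbouring set and the proposition's ranges stop the claim exactly at $x_i^1$ and $x_i^{|S_i|}$.
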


\begin{corollary} \label{cor:elementNodesConsecutive}
Let $i \in {\cal I}_P$, then for every $x,x' \in S_i$ the only nodes between $x$ and $x'$ on $P$ are either element-nodes or set-nodes that are associated with $S_i$.
\end{corollary}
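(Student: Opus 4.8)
The plan is to establish a stronger \emph{block} structure and then read off the corollary. Concretely, I would prove that the nodes of $S_i$ visited by $P$ occupy one contiguous stretch $v_p, v_{p+1}, \ldots, v_q$ of the path whose element-nodes are exactly $x_i^1, x_i^2, \ldots, x_i^{|S_i|}$, occurring in this increasing index order, and whose set-nodes are exactly all the $u_i^\bullet$ lying on $P$ --- with no other node occurring in $[p,q]$, and no node of $S_i$ occurring outside $[p,q]$. Granting this, the corollary is immediate: writing $x = x_i^{j_1}$ and $x' = x_i^{j_2}$ with (without loss of generality) $j_1 < j_2$, Corollary~\ref{cor:startEnd} gives that each of $x, x'$ occurs exactly once on $P$, and by the block structure these unique occurrences lie in $[p,q]$ with the occurrence of $x_i^{j_1}$ preceding that of $x_i^{j_2}$; hence every node strictly between them also lies in $[p,q]$ and is therefore either an element-node or a set-node associated with $S_i$.

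To build the block I would start from the hypothesis $i \in {\cal I}_P$, which guarantees that at least one set-node $u_i^\bullet$ lies on $P$; let $u_i^a$ and $u_i^b$ be those with the smallest and largest superscript. First I would use the ``moreover'' clause of Proposition~\ref{p:elementNodesOrder} in both directions to show that the $S_i$ set-nodes on $P$ sit at consecutive set-positions --- which by Lemma~\ref{l:setElementNodesPath} are spaced exactly $r-1$ apart --- and carry superscripts forming the arithmetic progression $a, a+(r-2), a+2(r-2), \ldots, b$. Parts (1) and (2) of Proposition~\ref{p:elementNodesOrder} then fill in the $r-2$ element-nodes between any two consecutive set-nodes of this run with consecutive indices, so that along the block the element index increases by exactly one at each element step; in particular the position-to-index map is strictly monotone, and every index in the range that the block covers is hit exactly once.

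It remains to pin the two ends of the block to $x_i^1$ and $x_i^{|S_i|}$, which I expect to be the main obstacle, since it is where the global path structure (entry into and exit from $S_i$) enters. For the lower end I would argue by minimality of $a$: if $a$ were large enough for the backward clause of Proposition~\ref{p:elementNodesOrder} to apply, it would exhibit an $S_i$ set-node of smaller superscript on $P$, a contradiction; this forces $a$ to be small, and then part (1) runs the preceding element-nodes down to $x_i^1$. The upper end is symmetric, using maximality of $b$ and part (2) to reach $x_i^{|S_i|}$. Finally, to see that no foreign node intrudes into $[p,q]$, I would note that by Lemma~\ref{l:setElementNodesPath} the set-positions are precisely those congruent to $1$ modulo $r-1$; the element-nodes sitting between $x_i^1$ and the first set-node $u_i^a$ (and symmetrically at the top) occupy positions that are \emph{not} set-positions, so no set-node of another set can slip in, and the block is purely $S_i$. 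Combined with the maximality of the run and the uniqueness of element-node occurrences (Corollary~\ref{cor:startEnd}), this also shows that no $S_i$-node lies outside $[p,q]$, completing the block structure and hence the corollary.
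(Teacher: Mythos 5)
Your proposal is correct and takes essentially the paper's route: the paper gives no explicit proof of this corollary, presenting it as an immediate consequence of Proposition~\ref{p:elementNodesOrder} together with Lemma~\ref{l:setElementNodesPath} and Corollary~\ref{cor:startEnd}, and your block-structure argument is precisely that derivation spelled out. One organizational caveat: the claim in your second paragraph that all $S_i$ set-nodes on $P$ form a single run at consecutive set-positions does \emph{not} follow from the ``moreover'' clause alone (two separate runs, one whose top $u_i^{b'}$ has $|S_i|-b'+1<r-2$ and one whose bottom $u_i^{a''}$ has $a''\le r-2$, would satisfy it vacuously), so the element-uniqueness-plus-coverage argument via Corollary~\ref{cor:startEnd} that you invoke only at the end must be brought forward to justify that step --- each maximal run already covers $x_i^1,\ldots,x_i^{|S_i|}$, so a second run would force a repeated element-node on the simple path $P$.
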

We are now ready to prove Lemma~\ref{l:exactCoverFromPath}.
\begin{proof}[Proof of Lemma~\ref{l:exactCoverFromPath}]
We will first show that $\bigcup_{i \in {\cal I}_P}S_i = X$. Let $x \in X$. Then there is some $\alpha \in [k]$ such that $x = v_\alpha$. Let $\beta < \alpha < \gamma$ be such that $v_\beta$ and $v_\gamma$ are the set-nodes preceding and following $v_\alpha$ in $P$ respectively. Then $x \in \{v_\ell\}_{\ell=\beta}^\gamma$. By the construction of $H$, there exist $i \in [m]$ and $j, j' \in [|S_i|]$ such that either $v_\beta$ or $v_\gamma$ is $u_i^j$ and $x = x_i^{j'}$. Since $u_i^j \in P$ then $i \in {\cal I}_P$ and therefore $x \in \bigcup_{i \in {\cal I}}S_i$.

Next, let $i,i' \in {\cal I}_P$ be such that $i \ne i'$. Then there exist $j \in [|S_i|]$ and $j' \in [|S_{i'}|]$ such that $u_i^j, u_{i'}^{j'} \in P$. That is, there exist $\alpha, \alpha' \in [k]$ such that $v_\alpha = u_i^j$ and $v_{\alpha'}= u_{i'}^{j'}$. Without loss of generality, we may assume $\alpha < \alpha'$, and there are no set-nodes associated with $S_i$ between $v_\alpha$ and $v_{\alpha'}$ (otherwise we take the last set-node associated with $S_i$ before $v_{\alpha'}$). Then $e=\{v_\ell\}_{\ell=\alpha}^{\alpha+r}$ is a heavy transition hyperedge. If $\alpha'=\alpha+r$ then by definition of transition hyperedges, $S_i \cap S_{i'} = \emptyset$. Otherwise, none of the element nodes in $e$ are associated with $S_{i'}$, and moreover, $x_i^{|S_i|} \in e$. From Corollary~\ref{cor:elementNodesConsecutive} it follows that the elements of $S_i$ are visited by $P$ in the order $x_i^1,x_i^2,\ldots,x_i^{|S_i|}$ and similarly the elements of $S_{i'}$ are visited in the order $x_{i'}^1,x_{i'}^2,\ldots,x_{i'}^{|S_{i'}|}$. Since $x_{i'}^1$ is visited by $P$ after $x_i^{|S_i|}$ then we get that $S_i \cap S_{i'} = \emptyset$.
\end{proof}

We have therefore proved that if there is a tight path of length $k$ in $H$, then the \EXC is a 'yes' instance.
Conversely, assume that the \EXC instance is a 'yes' instance and let ${\cal S}_C=\{S_1,...,S_t\}$ be an exact cover. Since ${\cal S}_C$ is an exact cover, for every $x \in X$ there exist a unique pair $i \in [t], j \in [|S_i|]$ such that $x = x_i^j$. We can therefore order the elements of $X$ according to a lexicographic ordering. That is for all $i,i' \in [s], j \in [|S_i|]$ and $j' \in [|S_{i'}|]$, $x_i^j$ comes before $x_{i'}^{j'}$ if and only if $i < i'$ or $i=i'$ and $j < j'$. Consider the following sequence of element-nodes.
$$x_{start}-x_1^1-\ldots-x_1^{|S_1|}-x_2^1-\ldots-x_2^{|S_2|}-x_3^1-\ldots-x_t^{|S_t|}-x_{end} \;.$$
Between every $r-2$ element-nodes, insert the corresponding set-node. That is, if the element-node that follows the set-node is $x_i^j$, then the set-node is $u_i^j$. Add $u_{start}, u_{end}$ in the beginning and end of the sequence respectively. The following claim is proved similarly to the claims used to prove the previous direction.
\begin{claim}
The sequence constructed above constitutes a tight path of length $k$ in $H$.
\end{claim}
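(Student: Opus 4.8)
The plan is to verify, directly from the construction, the three properties of a tight $k$-path: that the sequence has exactly $k$ nodes, that its nodes are pairwise distinct, and that every $r$ consecutive nodes form a hyperedge of $H$. I would begin with the counting and distinctness, which are immediate. Since ${\cal S}_C$ is an exact cover, the labels $x_1^1,\ldots,x_t^{|S_t|}$ range over all $n$ elements of $X$, each exactly once; together with $x_{start},x_{end}$ this gives $n+2$ distinct element-nodes. Inserting a set-node after each of the first $B-1$ blocks of $r-2$ element-nodes, where $B=(n+2)/(r-2)$ is an integer by Assumption~\ref{ass:modr}, and adding $u_{start},u_{end}$, yields $B+1=(n+2)/(r-2)+1$ set-nodes, which are distinct because each carries a distinct label. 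Summing, $(n+2)+(n+2)/(r-2)+1=(n+2)(1+1/(r-2))+1=k$. The same bookkeeping shows the set-nodes occupy exactly the positions $\alpha$ with $(r-1)\mid(\alpha-1)$, so the sequence has the position pattern forced in the forward direction (Lemma~\ref{l:setElementNodesPath}); in particular every window of $r$ consecutive nodes contains either two set-nodes (if it starts at such a position), in which case I call it \emph{heavy}, or exactly one, in which case it is \emph{light}.

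The substance is tightness, which I would establish window by window. First, the boundary windows: the first $r$ nodes are $u_{start},x_{start},x_1^1,\ldots,x_1^{r-3}$ and then the set-node before the next element, which reproduces a starting edge, and symmetrically the last $r$ nodes reproduce an ending edge; here I use that the set-node the construction places before an element-node carries the matching label. For an interior window I split on whether all of its element-nodes belong to one cover set or to two consecutive ones. If they all lie in a single $S_i$, the window is an internal edge: a heavy window is $\{u_i^j\}\cup\{x_i^\ell\}_{\ell=j}^{j+r-3}\cup\{u_i^{j+r-2}\}$, and a light window matches the light internal template with split point $h$ read off from the location of its lone set-node. If the element-nodes span two cover sets, then because the path visits the cover sets in increasing order of their original index and the sets of an exact cover are pairwise disjoint, the two sets $S_i,S_{i'}$ satisfy $i<i'$ and $S_i\cap S_{i'}=\emptyset$, so a transition edge is available, and I would match the window to the appropriate transition template according to whether it carries two set-nodes or a single one lying in the $S_i$-part or the $S_{i'}$-part, reading the parameters off the number of $S_i$- and $S_{i'}$-element-nodes it contains.

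The main obstacle is the transition analysis. Because set-nodes are inserted uniformly every $r-2$ element-nodes \emph{globally}, their spacing is not aligned with the set boundaries, so across a given transition the boundary $x_i^{|S_i|}\,|\,x_{i'}^1$ can sit at any of the $r-2$ offsets relative to the block structure. I must check that every resulting straddling window---of which there are a constant number per transition, running through all these offsets---does coincide with one of the transition templates, and in particular that the induced parameters $j,h$ land in the prescribed ranges and that the set-node labels agree with those actually inserted; this is exactly the delicate phase-by-phase bookkeeping, and it is where Assumption~\ref{ass:modr} ($|S_i|\ge 2r$) is needed, guaranteeing that each set is long enough that its transition windows at its two ends, together with the start/end windows, never overlap, so every window falls under exactly one case. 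Once this finite check is completed, every window is a hyperedge and the sequence is a tight $k$-path, as claimed.
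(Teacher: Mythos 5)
Your outline sets up the right frame, and it is surely the intended one: count the nodes ($n+2$ element-nodes plus $(n+2)/(r-2)+1$ set-nodes, totalling $k$), note distinctness, observe that set-nodes occupy exactly the positions $\alpha$ with $(r-1)\mid(\alpha-1)$, and reduce tightness to a finite window-by-window comparison against the edge templates, split into start/end, internal, and transition windows. (The paper itself offers nothing more explicit than "proved similarly to the claims used to prove the previous direction," so your attempt is already more detailed than the published text.) The problem is that your write-up stops at precisely the step you yourself identify as the main obstacle: the phase-by-phase verification that every window straddling a set boundary coincides with a transition template, with parameters $j,h$ in the prescribed ranges and with matching set-node labels. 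You state that this check must be done and then assert its outcome ("once this finite check is completed\dots"), but you never perform it. Since everything before that point is bookkeeping, the entire content of the claim lives in the deferred check; the proposal is a plan for a proof rather than a proof.

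Moreover, the deferred check is not routine, and against the hyperedge set exactly as the paper defines it, it fails for one of the $r-2$ phases. If an insertion point falls exactly on the boundary between consecutive cover sets $S_i$ and $S_{i'}$ --- equivalently, $(r-2)$ divides $1+\sum_{\ell\le i}|S_\ell|$, which can be forced under \emph{every} ordering of the cover, e.g.\ when all cover sets have equal size $s$ with $(r-2)\mid(s+1)$ --- then the rule "label the set-node by the element that follows it" inserts $u_{i'}^1$, and the $r-2$ light windows straddling the boundary all have the form $\{x_i^\ell\}_{\ell=|S_i|-c}^{|S_i|}\cup\{u_{i'}^1\}\cup\{x_{i'}^\ell\}_{\ell=1}^{r-2-c}$ with $1\le c\le r-2$. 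No hyperedge of $H$ equals such a set: the only light transition edges whose set-node is associated with $S_{i'}$ carry $u_{i'}^{h+1}$ with $h\in[r-j-2]$, hence superscript at least $2$, and the window with $c=r-2$ would additionally need $j=r-2\notin[r-3]$ (for $r=3$ the ranges $[r-3]$ are empty, so there are no light transition edges at all). A similar label mismatch occurs at the path's end, where the stated rule inserts $u_t^{|S_t|-(r-4)}$ while both ending edges require $u_t^{|S_t|-(r-3)}$. So one cannot simply "run through all offsets and match templates": a correct argument must either adjust which set-node labels are inserted, show the bad boundary phase can be avoided (it cannot always be), or enlarge the family of transition edges --- substantive repairs that your proposal, and for that matter the paper's one-line proof, leave unaddressed.
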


We have therefore proved that there exists an exact cover if and only if there is a tight path of length $k$ in $H$, thus completing the proof of Theorem~\ref{th:lowerBound}.
\subparagraph*{Discarding Assumption~\ref{ass:modr}.} 
Given an \EXC instance $X = \{x_1,\ldots,x_n\}, {\cal S}=\{S_1,\ldots,S_m\}$, we can augment $X$ by $k$ new elements $x_{n+1},\ldots,x_{n+k}$ for some $4r \le k \le 5r$ such that $r-2$ divides $n+k+2$. Call the new set $X'$. We then create $2r$ collections of subsets ${\cal S}_1,\ldots,{\cal S}_{2r}$ by defining ${\cal S}_\ell \eqdef {\cal S} \cup \{\{x_{n+1}\},\ldots,\{x_{n+\ell}\},\{x_{n+\ell+1},\ldots,x_{n+k}\}\}$ for every $\ell \in [2r]$. Note that $m \le |{\cal S}_\ell| \le m+2r$. Next, for every $\ell \in [2r]$ we define a collection ${\cal S}'_\ell$ as follows. For every $2r$ pairwise-disjoint sets in ${\cal S}_\ell$, we add their union to ${\cal S}'_\ell$. The following is straightforward.
\begin{claim}
Given $X,{\cal S}$ and $\ell \in [2r]$ 
\begin{enumerate}
	\item we can construct $X',{\cal S}'_\ell$ in time $O(m^{2r+1})$;
	\item for every $\ell \in [2r]$ and every set $S \in {\cal S}'_\ell$ we get that $|S|\ge 2r$; and
	\item $X',{\cal S}'_\ell$ is a 'yes' instance for \EXC if and only if $X, {\cal S}$ is a 'yes' instance for \EXC and has an exact cover of size $2rt-\ell$ for some positive integer $t$.
\end{enumerate}
\end{claim}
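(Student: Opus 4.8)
The plan is to dispatch the three assertions in turn; parts (1) and (2) fall straight out of the definition of ${\cal S}'_\ell$, while part (3) carries all the content. For (1), ${\cal S}_\ell$ adds $\ell+1\le 2r+1$ sets to ${\cal S}$, so $|{\cal S}_\ell|=O(m)$ for fixed $r$ and there are $\binom{|{\cal S}_\ell|}{2r}=O(m^{2r})$ candidate $2r$-tuples of sets of ${\cal S}_\ell$; I would enumerate them all, test pairwise disjointness and form the union of each in time polynomial in $n$ and $r$, and collect the resulting unions, which yields ${\cal S}'_\ell$ within the claimed $O(m^{2r+1})$ once the polynomial per-tuple work is folded in. For (2), we may assume without loss of generality that every set of ${\cal S}$ is nonempty (empty sets are useless for \EXC), and the padding sets are nonempty as well since $\ell<k$. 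By construction each $S\in{\cal S}'_\ell$ is a union of $2r$ pairwise-disjoint nonempty sets, each contributing at least one element, and disjointness keeps these contributions distinct, so $|S|\ge 2r$.

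The heart of the claim is (3), which I would prove in two moves. First I would establish a size-tracking correspondence between exact covers of $X'$ by ${\cal S}'_\ell$ and exact covers of $X'$ by ${\cal S}_\ell$ whose cardinality is divisible by $2r$. Given an exact cover ${\cal C}\subseteq{\cal S}'_\ell$, replace each member $C\in{\cal C}$ by a fixed decomposition into its $2r$ constituent ${\cal S}_\ell$-sets; distinct members of ${\cal C}$ are disjoint, so constituents drawn from different members are disjoint, and all constituents are nonempty and hence distinct. The collection of all constituents is thus a genuine sub-family of ${\cal S}_\ell$, is pairwise disjoint, covers $X'$, and has size exactly $2r\,|{\cal C}|$. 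Conversely, an exact cover of $X'$ by ${\cal S}_\ell$ of size $2rs$ can be split into $s$ blocks of $2r$ sets each; every block is pairwise disjoint, so its union lies in ${\cal S}'_\ell$, and the $s$ block-unions are pairwise disjoint, distinct, and cover $X'$, giving an exact cover by ${\cal S}'_\ell$ of size $s$. Hence $(X',{\cal S}'_\ell)$ is a 'yes' instance if and only if $X'$ admits an exact cover by ${\cal S}_\ell$ of size divisible by $2r$.

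The second move analyzes exact covers of $X'$ by ${\cal S}_\ell$ directly. The new elements decouple from the old ones: no original set meets $\{x_{n+1},\ldots,x_{n+k}\}$, and no padding set meets $X$. Moreover each new element lies in a unique padding set (the element $x_{n+j}$ with $j\le\ell$ only in the singleton $\{x_{n+j}\}$, and each $x_{n+j}$ with $j>\ell$ only in $\{x_{n+\ell+1},\ldots,x_{n+k}\}$), so every exact cover of $X'$ by ${\cal S}_\ell$ is forced to contain all the padding sets, which together cover exactly the new elements. Deleting them leaves an exact cover of $X$ by ${\cal S}$, and conversely any exact cover of $X$ by ${\cal S}$ extended by all padding sets is an exact cover of $X'$. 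Consequently the ${\cal S}_\ell$-cover sizes of $X'$ are exactly the ${\cal S}$-cover sizes of $X$ shifted by the fixed number of forced padding sets; combining this with the divisibility-by-$2r$ condition from the first move confines the ${\cal S}$-cover size of $X$ to a single residue class modulo $2r$, namely the arithmetic progression $2rt-\ell$ with $t$ a positive integer, which is the asserted equivalence.

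The main obstacle is the correspondence in the first move of (3): one must verify carefully in both directions that the decomposition and the regrouping genuinely produce sub-families of ${\cal S}_\ell$ and of ${\cal S}'_\ell$ (pairwise-disjoint, distinct sets) and that cardinalities are preserved up to the exact factor $2r$; once this is in place the count of forced padding sets and the modular bookkeeping are routine. For context I would also note that, although the Claim fixes $\ell$, the surrounding reduction runs over all $\ell\in[2r]$ precisely so that the progressions $2rt-\ell$ sweep out every residue modulo $2r$; this is what lets the family $\{(X',{\cal S}'_\ell)\}_{\ell\in[2r]}$ detect an exact cover of $X$ of its a priori unknown size, making it equivalent to the original instance.
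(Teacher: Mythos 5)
Your two-move strategy---passing from covers by ${\cal S}'_\ell$ to covers of $X'$ by ${\cal S}_\ell$ whose size is divisible by $2r$ (grouping and ungrouping blocks of $2r$ pairwise-disjoint sets), and then observing that every exact cover of $X'$ from ${\cal S}_\ell$ is forced to consist of all the padding sets together with an exact cover of $X$ from ${\cal S}$---is sound, and it is evidently the argument the paper intends (the paper gives no proof, calling the claim straightforward). Your attention to the details that make the correspondence work is also correct: nonemptiness plus pairwise disjointness is what makes all constituents distinct in both directions, and your ``no empty sets'' normalization is genuinely needed, since with empty sets in ${\cal S}$ part (2) of the claim is simply false. (Your part (1) bound is slightly hand-wavy---the per-tuple work is $\poly(n)$, not $O(m)$, so strictly one gets $O(m^{2r}\poly(n))$---but this matches the level of precision of the claim itself.)

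The genuine problem is in your final bookkeeping, and your own analysis contradicts it. You correctly identify the forced padding sets: the $\ell$ singletons $\{x_{n+1}\},\ldots,\{x_{n+\ell}\}$ plus the one set $\{x_{n+\ell+1},\ldots,x_{n+k}\}$, i.e.\ $\ell+1$ sets in total. Hence an exact cover of $X$ from ${\cal S}$ of size $s$ corresponds to an exact cover of $X'$ from ${\cal S}_\ell$ of size $s+\ell+1$, and the divisibility condition $2r \mid s+\ell+1$ gives $s = 2rt-\ell-1$, \emph{not} $s=2rt-\ell$. Your closing assertion that the analysis yields the progression $2rt-\ell$, ``which is the asserted equivalence,'' therefore does not follow. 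In fact, the statement as literally written is false for the construction as literally written: if $\ell<2r$ and the only exact cover of $X$ has size $2r-\ell$, the claim (with $t=1$) says $X',{\cal S}'_\ell$ is a 'yes' instance, yet the unique corresponding ${\cal S}_\ell$-cover has size $2r+1$, which is not a multiple of $2r$, so $X',{\cal S}'_\ell$ is a 'no' instance. The discrepancy is an off-by-one error in the paper's claim rather than in your approach; a correct write-up must either replace $2rt-\ell$ by $2rt-\ell-1$ or modify the construction so that only $\ell$ padding sets are forced. The error is harmless downstream, because as $\ell$ ranges over $[2r]$ the residues $-(\ell+1) \bmod 2r$ still sweep every residue class, so the conclusion that $X,{\cal S}$ is a 'yes' instance if and only if some $X',{\cal S}'_\ell$ is remains intact---but a proof cannot silently bridge the mismatch the way your last sentence does.
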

Therefore $X,{\cal S}$ is a 'yes' instance for \EXC if and only if there exists $\ell \in [2r]$ such that $X',{\cal S}'_\ell$ is a 'yes' instance for \EXC. Performing the reduction to \KHP on every instance separately we get a reduction of a general \EXC instance to $2r$ \KHP instances.

%%%%%%%%%%%%%%%%%%%%%%%%%%%%%%%%%%%%%%%%%%%%%%%%%%%%%%%%%%%%%%%%%%%%
%%%%%%%%%%%%%%%%%%%%%%%%%%%%%%%%%%%%%%%%%%%%%%%%%%%%%%%%%%%%%%%%%%%%

\section{Algorithm for \texorpdfstring{\KHP}{KHP}}\label{ProofsEXC}
In this section we show that \KHP can be solved in time $2^k m n^{O(1)}$, thus proving Theorem~\ref{th:alg}. We follow the ideas of Koutis~\cite{Koutis08} and of Williams~\cite{Will09}, with additional ideas to take care of our generalization. A short introduction to arithmetic circuits follows.
\paragraph*{Arithmetic Circuits}
An arithmetic circuit over a specified ring $K$ is a directed acyclic graph with nodes labeled from $\{+,\times\}\cup\{x_1,\dots,x_n\}\cup K$, where $\{x_1,\dots,x_n\}$ are the input variables of the circuit. Nodes with zero out-degree are called \textit{output nodes}, and nodes with labels from $\{x_1,\dots,x_n\}\cup K$ are called \textit{input nodes}. The size of the circuit is the number of nodes in it. Clearly, every output node can be associated with a polynomial on $K[x_1,\dots,x_n]$.
A polynomial $p\in K[x_1,\dots,x_n]$ is said to contain a multilinear term if it contains a term of the form $c\prod_{i\in S}x_i$ for a non-empty $S\subseteq [n]$ in the standard monomial expansion of $p$. See~\cite{circuits97} for more details on arithmetic circuits.
\paragraph*{The Algorithm}
Given a hypergraph $H$, we define a polynomial $P_k(X)$ on the set of variables $X=\{x_u:u\in V(H)\}$.
$$
P_k(x_1,\dots,x_n)=\sum_{i_1,...,i_k\text{ is a tight walk in }H} x_{i_1}\cdot\cdot\cdot x_{i_k},
$$
where a \textit{tight walk} is a sequence of nodes (not necessarily distinct) such that every $r$ consecutive nodes are contained in some edge in $H$.
Clearly, there is a $k$-hyperpath iff $P_k(x_1,\dots,x_n)$ contains a multilinear term.
We first show that $P_k$ can be succinctly implemented by an arithmetic circuit.
\begin{lemma}
Let $H=(V_H,E_H)$ be a hypergraph on $n$ nodes and $m$ hyperedges, and let $k\leq n$ be an integer. Then $P_k(x_1,\dots,x_n)$ is computable by an arithmetic circuit of size $O(mk)$.
\end{lemma}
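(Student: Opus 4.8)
The plan is to build the circuit as a layered dynamic program that counts tight walks, where the crucial design choice is to index the intermediate polynomials by \emph{hyperedges} rather than by $(r-1)$-tuples of vertices. Think of each (directed) hyperedge as an ordered tuple $e=(e_1,\dots,e_r)$. For every length $\ell$ with $r\le\ell\le k$ and every hyperedge $e$, I would introduce a gate computing the polynomial $A_\ell^e$ that sums $x_{i_1}\cdots x_{i_\ell}$ over all tight walks $i_1,\dots,i_\ell$ whose last $r$ vertices $(i_{\ell-r+1},\dots,i_\ell)$ coincide with $e$. The base case is immediate: $A_r^e=x_{e_1}\cdots x_{e_r}$, since a tight walk of length exactly $r$ is a single hyperedge, costing $O(r)$ multiplication gates per edge. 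The recurrence rests on the observation that the two hyperedges $(i_{\ell-r},\dots,i_{\ell-1})$ and $(i_{\ell-r+1},\dots,i_\ell)$ of a tight walk overlap in the $r-1$ vertices $(i_{\ell-r+1},\dots,i_{\ell-1})$; hence deleting the last vertex sets up a bijection between tight walks of length $\ell$ ending in $e$ and tight walks of length $\ell-1$ ending in some hyperedge $e'$ whose length-$(r-1)$ suffix equals the length-$(r-1)$ prefix of $e$, giving, for $r<\ell\le k$,
\[
A_\ell^e \;=\; x_{e_r}\cdot\!\!\sum_{e':\,\mathrm{suf}(e')=\mathrm{pre}(e)}\!\! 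A_{\ell-1}^{e'},
\]
and finally $P_k=\sum_e A_k^e$.

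The main obstacle is that the recurrence as written is too expensive: a single edge's prefix may match the suffixes of up to $m$ edges, so summing the $A_{\ell-1}^{e'}$ directly costs $\Theta(m)$ addition gates per $A_\ell^e$ and $\Theta(m^2k)$ in total. To recover the claimed $O(mk)$ bound I would factor these sums through shared intermediate gates. For each layer $\ell$ and each distinct $(r-1)$-tuple $w$ occurring as the suffix of some hyperedge, introduce a gate $B_\ell^w=\sum_{e':\,\mathrm{suf}(e')=w}A_\ell^{e'}$; the recurrence then reads $A_\ell^e=x_{e_r}\cdot B_{\ell-1}^{\mathrm{pre}(e)}$, with $B_{\ell-1}^{\mathrm{pre}(e)}$ taken to be $0$ when no edge has that suffix, which correctly forces $A_\ell^e=0$. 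The point is that each $A_\ell^{e'}$ feeds into exactly one $B_\ell^w$, namely the one indexed by its own suffix, so the collection $\{B_\ell^w\}_w$ is obtained by partitioning the $m$ polynomials $\{A_\ell^{e'}\}_{e'}$ according to their suffix and summing within each block; this uses at most $m$ addition gates across the whole layer.

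With this factoring, each layer contributes $O(m)$ multiplication gates (one per $A_\ell^e$) and $O(m)$ addition gates (for all the $B_\ell^w$ together), the base layer contributes $O(rm)=O(m)$ gates, and the final output $P_k=\sum_e A_k^e$ adds another $O(m)$; summing over the $k-r+1$ layers yields a circuit of size $O(mk)$, as required. Correctness would follow by induction on $\ell$ from the bijection described above, so that $A_\ell^e$ enumerates exactly the monomials of tight walks of length $\ell$ ending in $e$ and $P_k$ is precisely the advertised sum over all tight walks. The only bookkeeping left is the wiring that identifies, for each edge $e$, the index $\mathrm{pre}(e)$ among the $B$-gates and groups the $A$-gates by suffix; since $r$ is fixed this is a routine polynomial-time preprocessing step that does not affect the gate count.
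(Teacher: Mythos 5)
Your construction is correct, and at its core it is the same as the paper's: gates indexed by pairs (hyperedge, walk length) computing the generating polynomial of tight walks that end in that edge, the same base case, the same delete-the-last-vertex recurrence, and the same final sum over all edges. The difference is in how the $O(mk)$ bound is accounted for. The paper writes the recurrence exactly in the form you dismiss as too expensive --- $f_H(e,t+1)$ is a single sum over all edges $e'$ whose length-$(r-1)$ suffix equals the prefix of $e$ --- and concludes $O(mk)$ by counting \emph{nodes} only: circuit size is defined as the number of nodes, and Williams' theorem permits $+$ gates of unbounded fan-in, so the wires (of which there can be far more than $O(mk)$, e.g.\ $\Omega(m^2)$ per layer when many edges share a common prefix and suffix) are never charged. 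Your intermediate suffix-indexed gates $B_\ell^w$ are therefore not needed for the lemma under the paper's definitions, but they buy a genuinely stronger statement: a circuit with $O(mk)$ wires, hence size $O(mk)$ even if additions are restricted to fan-in two. This is not a pedantic improvement, since the algorithm behind Theorem~\ref{th:alg} evaluates the circuit $2^k$ times and each evaluation costs time proportional to the number of wires, not gates; your factoring supports the claimed $2^k m\, n^{O(1)}$ running time directly, whereas the paper's version implicitly relies on unbounded fan-in being free in the size measure (and on factoring the multiplication by $x_{i_r}$ out of the sum, which it does not state explicitly). In short: same dynamic program, but your wiring argument is a tightening rather than a departure.
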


For every hyperedge $(u_{i_1},u_{i_2},\dots,u_{i_r})$ and an integer $t\in [n]$ we define a polynomial $f_H(u_{i_1},u_{i_2},\dots,$ $u_{i_r},t)$.
Intuitively, $f_H(u_{i_1},u_{i_2},\dots,u_{i_r},t)$ contains a sum of monomials where each one corresponds to all tight walks of length $t$ that end with $u_{i_1},u_{i_2},\dots,u_{i_r}$ and on the same set of nodes.
Formally, $f_H(u_{i_1},u_{i_2},\dots,u_{i_r},t)$ can be defined inductively as follows. At the base level of the induction, for every edge $(u_{i_1},u_{i_2},\dots,u_{i_r})\in E_H$ we initialize $f_H(u_{i_1},u_{i_2},\dots,u_{i_r},t)= x_{i_1}\cdots x_{i_r}$ for $t=r$, and $f_H(u_{i_1},u_{i_2},\dots,u_{i_r},t)=0$ for $t < r$.
Now, $f_H(u_{i_1},u_{i_2},\dots,u_{i_r},t+1)$ can be constructed by knowing $f_H(v_{i_1},v_{i_2},\dots,v_{i_r},t)$ for all edges $(v_{i_1},v_{i_2},\dots,v_{i_r})$, as follows.
$$
f_H(u_{i_1},u_{i_2},\dots,u_{i_r},t+1)=\sum_{u_{i_0}:(u_{i_0},u_{i_1},\dots,u_{i_{r-1}})\in E_H} f_H(u_{i_0},u_{i_1},\dots,u_{i_{r-1}},t)\cdot x_{i_{r}}
$$
Finally, $\tilde{P}_k$ which is the required polynomial, sums over all hyperedges $e$ the length-$k$ tight walks that end with $e$.
$$
\tilde{P}_k(x_1,\dots,x_n)=\sum_{(u_{i_1},u_{i_2},\dots,u_{i_{r}})\in E_H} f_H(u_{i_1},u_{i_2},\dots,u_{i_{r}},k)
$$

The size of the $\tilde{P}_k$ is bounded by $1$ plus the number of possible labellings of the form $(u_{i_1},u_{i_2},\dots,u_{i_r},t)$ as an input to $f_H$, which is bounded by $O(mk)$ since $f_H$ is only defined where the first $r$ entries are an edge in $E_H$. 
We use the following result by~\cite{Will09} to achieve our bound.
\begin{theorem}[see Theorem $3.1$ in~\cite{Will09}]
Let $P(x_1,\dots,x_n)$ be a polynomial of degree at most $k$, represented by an arithmetic circuit of size $s(n)$ with $+$ gates (of unbounded fan-in), $\times$ gates (of fan-in two) and no scalar multiplications. There is a randomized algorithm that on every $P$ runs in $2^k s(n) n^{O(1)}$ time, outputs yes with high probability if there is a multilinear term in the sum-product expansion of $P$, and always outputs no if there there is no multilinear term.
\end{theorem}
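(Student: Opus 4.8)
The plan is to reduce multilinear-term detection to polynomial identity testing by evaluating the circuit over a carefully chosen commutative ring in which every non-multilinear monomial is forced to vanish while each multilinear monomial survives with constant probability. Concretely, I would work over the group algebra $R = \mathrm{GF}(2^\ell)[\mathbb{Z}_2^k]$ of the elementary abelian $2$-group of rank $k$, for a field extension $\mathrm{GF}(2^\ell)$ whose size is chosen later. Writing $u_g$ for the basis element indexed by $g \in \mathbb{Z}_2^k$, I would draw vectors $v_1,\dots,v_n \in \mathbb{Z}_2^k$ and scalars $a_1,\dots,a_n \in \mathrm{GF}(2^\ell)$ independently and uniformly at random, substitute $x_i \mapsto a_i(u_0 + u_{v_i})$ into the circuit, evaluate the circuit in $R$, and answer \textbf{yes} iff the resulting element of $R$ is nonzero.

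Correctness rests on two observations. First, because the characteristic is two and every element of $\mathbb{Z}_2^k$ has order two, $(u_0 + u_{v_i})^2 = u_0 + u_{2v_i} = u_0 + u_0 = 0$; hence any monomial in which some variable occurs with multiplicity at least two evaluates to $0$, so a polynomial with no multilinear term always evaluates to exactly $0$ and the algorithm never errs on \textbf{no}-instances. Second, a multilinear monomial $\prod_{i\in T}x_i$ maps to $\bigl(\prod_{i\in T}a_i\bigr)\sum_{S\subseteq T}u_{\sum_{i\in S}v_i}$, whose coefficient of $u_0$ is nonzero whenever $\{v_i\}_{i\in T}$ are linearly independent over $\mathbb{F}_2$; since $|T|\le k$ random vectors in $\mathbb{Z}_2^k$ are independent with probability bounded below by a positive constant ($\prod_{t\ge 1}(1-2^{-t})>1/4$), any fixed multilinear term survives with constant probability. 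The auxiliary scalars $a_i$ guarantee that distinct multilinear monomials carry distinct monomials in the $a$-variables and therefore cannot cancel one another; treating the $a_i$ as formal variables, the evaluation is a nonzero polynomial of degree at most $k$ with coefficients in $R$, so by the Schwartz--Zippel lemma a random choice of the $a_i$ from $\mathrm{GF}(2^\ell)$ keeps it nonzero with probability at least $1-k/2^\ell$. Taking $2^\ell = \Theta(k)$ and repeating a constant number of times yields the claimed high-probability one-sided guarantee.

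For the running time, I would represent an element of $R$ as its vector of $2^k$ coordinates over $\mathrm{GF}(2^\ell)$. A $+$ gate is then a coordinatewise sum costing $O(2^k)$ field operations. The only real work is at the $\times$ gates, where the product in $R$ is the XOR-convolution of two length-$2^k$ vectors. Passing to the isomorphic truncated-polynomial description $\mathrm{GF}(2^\ell)[w_1,\dots,w_k]/(w_1^2,\dots,w_k^2)$ via $u_{e_j}\mapsto 1+w_j$, this product becomes the disjoint-union (subset) convolution, which fast subset convolution computes in $O(2^k k^2)$ ring operations using only additions, subtractions and multiplications -- crucially no division by two, so the method remains valid in characteristic two. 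Each field operation costs $\mathrm{poly}(\ell,k)=n^{O(1)}$ time, so each of the $s(n)$ gates costs $2^k n^{O(1)}$, for a total of $2^k s(n) n^{O(1)}$.

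The main obstacle I anticipate is exactly this multiplication step: the group algebra $\mathrm{GF}(2^\ell)[\mathbb{Z}_2^k]$ is not semisimple (Maschke's theorem fails because the characteristic divides the group order), so the Walsh--Hadamard diagonalization that would make XOR-convolution trivial in odd characteristic is unavailable. The resolution is the change of basis to the truncated-polynomial ring together with fast subset convolution, which sidesteps the forbidden division by two. A secondary point requiring care is ruling out cross-monomial cancellation among the surviving multilinear terms, which is precisely what the random scalars $a_i$ and the Schwartz--Zippel argument are there to handle.
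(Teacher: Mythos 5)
The paper itself gives no proof of this statement: it is quoted from Williams~\cite{Will09} and used as a black box, so your proposal can only be judged against Williams' original argument. Your core route is the same as his: evaluate the circuit over the group algebra $\mathrm{GF}(2^\ell)[\mathbb{Z}_2^k]$ after substituting $x_i \mapsto a_i(u_0+u_{v_i})$, use $(u_0+u_{v_i})^2=0$ in characteristic two for the always-correct ``no'' answer, use linear independence of at most $k$ random vectors (probability $>1/4$) for survival of a fixed multilinear term, and use Schwartz--Zippel over auxiliary scalars against cancellation. Where you genuinely depart is the implementation of the $\times$ gates: instead of the usual route through the Walsh--Hadamard transform (which is not invertible in characteristic two and so forces a lift of the coefficients to characteristic zero, with an exact division by $2^k$, before reducing back mod $2$), you pass to the truncated polynomial ring $\mathrm{GF}(2^\ell)[w_1,\dots,w_k]/(w_1^2,\dots,w_k^2)$, isomorphic to the group algebra in characteristic two via $u_{e_j}\mapsto 1+w_j$, and run division-free fast subset convolution at $O(2^k k^2)$ ring operations per gate. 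That step is correct, stays natively in characteristic two, and is arguably cleaner than the standard treatment; you also correctly identify why the naive Fourier approach fails.

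There is, however, one genuine gap in your completeness argument. Your scalars $a_i$ are attached to \emph{variables}, so every occurrence of a multilinear monomial $m=\prod_{i\in T}x_i$ in the sum-product expansion acquires the \emph{same} factor $\prod_{i\in T}a_i$; the occurrences are therefore summed with their integer multiplicity, which is then reduced mod $2$. If every multilinear monomial of $P$ has even coefficient, your evaluation is identically zero as a polynomial in the $a_i$, and Schwartz--Zippel yields nothing. Concretely, take the circuit $(x_1+x_2)\cdot(x_1+x_2)$: its sum-product expansion contains the multilinear term $x_1x_2$ (twice), yet over any commutative ring of characteristic two the evaluation is $(a_1A_1+a_2A_2)^2=a_1^2A_1^2+a_2^2A_2^2=0$ by the Frobenius identity, so your algorithm always answers ``no''. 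What your argument actually proves is detection of multilinear monomials with \emph{odd} coefficient, i.e., multilinear monomials of $P$ reduced mod $2$. To be fair, this parity caveat is inherent to the technique---Williams' algorithm is likewise a characteristic-two evaluation of the given circuit and behaves identically on this example, so the literal phrasing quoted above inherits the same imprecision from~\cite{Will09}---but a complete write-up must state the caveat explicitly, since the claim that the evaluation ``is a nonzero polynomial'' in the $a_i$ is false in general, and every application (including the walk polynomials used in this paper) must ensure the target monomials occur with odd multiplicity.
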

Since the degree of $\tilde{P}_k$ is clearly $k$, the time to construct the circuit is $2^k m n^{O(1)}$.
By a simple inductive argument it can be proved that $f_H(u_{i_1},\dots,u_{i_r},t)$ indeed contains a multilinear monomial of degree $t$ iff there exists a length-$t$ tight path that ends with $(u_{i_1},\dots,u_{i_r})$.

If we care about cycles, the only difference is that we also need to remember from what edge we have started the path, thus adding to $f_H$ an additional amount of $r$ entries, and the circuit size would increase to $O(m^2 k)$, and the running time to $2^k m^2 n^{O(1)}$, thus proving Theorem~\ref{th:alg}.

\section{Conclusion and Open Problems}\label{Section:Conclusion}
We have proved that \KHP could be solved in time $2^km n^{O(1)}$ and hard to solve significantly faster even for bounded $r$'s.  
It remains open to show for undirected \KHP an algorithm with running time $O^*(2^{(1-\gamma(r))k})$ (i.e. when $\gamma$ is a function of $r$). It is unclear how to extend the current fastest algorithms for undirected Hamiltonian path or \KPT to hypergraphs, and it may even be possible that a stronger $O^*(2^k)$ conditional lower bound holds for \KHP even for $r=3$.
Also, an important open problem is to prove any conditional lower bound with explicit constant in the exponent to \KPT. This task has been hard to complete, maybe because a (graphic) path does not seem to have enough information to make manipulations needed to solve \SCO. While \KHP is a stronger variant, it still appears close to \KPT and shows a possible direction to prove conditional lower bounds to this problem.

\newcommand{\etalchar}[1]{$^{#1}$}

\end{document}